\newcommand{\derives}{\Rightarrow^*}
\newcommand{\derivesk}[1]{\Rightarrow^#1}
\def\bigO{\hbox{$\mathcal {O}$}}
\newcommand{\tuple}[1]{( #1 )}
\newcommand{\set}[1]{\{ #1 \}}
\renewcommand{\emptyset}{\set{~}}
\newcommand{\Rule}[2]{#1 \rightarrow #2}
\newcommand{\Item}[2]{\ensuremath{{[}\ \Rule{#1}{#2}\ {]}}}
\newcommand{\Gnk}{\ensuremath{\mathcal{G}}}
\newcommand{\R}[2]{\ensuremath{\complement_{G,D}(\mathit{#1},\mathit{#2}})}
\newcommand{\rules}{P}
\newcommand{\nonterminals}{N}
\newcommand{\terminals}{\ensuremath{\Sigma}}
\newcommand{\pow}[1]{\ensuremath{\mathscr{P}(#1)}}
\newcommand{\m}{^{\bullet}}
\renewcommand{\u}{^{\circ}}
\newcommand{\traces}{traces}
\newcommand{\paths}{paths}
\newcommand\ucirc{\mathbin{\ooalign{$\cup$\cr%
   \hfil\raise0.42ex\hbox{$\scriptscriptstyle\Join$}\hfil\cr}}}
\newcommand{\munion}{\ucirc{}}
\newcommand{\st}{\mathbf{~s.t.~}}
\newcommand{\I}{\ensuremath{I}}
\newtheorem{grammar}{Grammar}
\newcommand{\defref}[2]{Definition~\ref{def:#1}.\ref{def:#1-#2}}
\newcounter{numberInTrivlist}
\newcommand{\ourremark}[3]{}%\noindent{{\footnotesize\sffamily\textcolor{#2}{#3\textsuperscript{#1}}}}}
  \providecommand\BibTeX{{%
    \normalfont B\kern-0.5em{\scshape i\kern-0.25em b}\kern-0.8em\TeX}}}
\begin{document}

%%
%% The "title" command has an optional parameter,
%% allowing the author to define a "short title" to be used in page headers.
%\title{CFPQ Evaluation with (Shared?) Generalized Items}
%\title{Context-Free Path Queries over RDF Databases}
\title{An Algorithm for Context-Free Path Queries over Graph Databases}

%%
%% The "author" command and its associated commands are used to define
%% the authors and their affiliations.
%% Of note is the shared affiliation of the first two authors, and the
%% "authornote" and "authornotemark" commands
%% used to denote shared contribution to the research.
\author{Ciro M. Medeiros}
\email{cirommed@ppgsc.ufrn.br}
\orcid{https://orcid.org/0000-0002-3928-5053}
\author{Martin A. Musicante}
\author{Umberto S. Costa}
\email{ { mam, umberto }@dimap.ufrn.br}
% \authornote{Both authors contributed equally to this research.}
% \authornotemark[1]
\affiliation{%
 \institution{Federal University of Rio Grande do Norte}
 \streetaddress{Campus Universitário Lagoa Nova}
 \city{Natal}
 \state{RN}
 \country{Brazil}
 \postcode{1524}
}

%%
%% By default, the full list of authors will be used in the page
%% headers. Often, this list is too long, and will overlap
%% other information printed in the page headers. This command allows
%% the author to define a more concise list
%% of authors' names for this purpose.
%\renewcommand{\shortauthors}{C. Medeiros et al.}

%%
%% The abstract is a short summary of the work to be presented in the
%% article.
\begin{abstract}
RDF (Resource Description Framework) is a standard language to represent graph databases.
Query languages for RDF databases usually include primitives to support path queries, linking pairs of vertices of the graph that are connected by a path of labels belonging to a given language.
Languages such as SPARQL include support for paths defined by regular languages (by means of Regular Expressions).
A context-free path query is a path query whose language can be defined by a context-free grammar. 
Context-free path queries can be used to implement queries such as the ``same generation queries'',  that are not expressible by Regular Expressions. 
In this paper, we present a novel algorithm for context-free path query processing.
We prove the correctness of our approach and show its run-time and memory complexity.
We show the viability of our approach by means of a prototype implemented in Go.
We run our prototype using the same cases of study as proposed in recent works, comparing our results with another, recently published algorithm.
The experiments include both synthetic and real RDF databases.
Our algorithm can be seen as a step forward, towards the implementation of more expressive query languages.

%For the real case experiment, our algorithm improves time in one order of magnitude, when compared to the previous algorithm.
\end{abstract}

%%
%% The code below is generated by the tool at http://dl.acm.org/ccs.cfm.
%% Please copy and paste the code instead of the example below.
%%
 \begin{CCSXML}
<ccs2012>
<concept>
<concept_id>10002951.10002952.10003190.10003192</concept_id>
<concept_desc>Information systems~Database query processing</concept_desc>
<concept_significance>500</concept_significance>
</concept>
<concept>
<concept_id>10002951.10003260.10003309.10003315.10003314</concept_id>
<concept_desc>Information systems~Resource Description Framework (RDF)</concept_desc>
<concept_significance>300</concept_significance>
</concept>
<concept>
<concept_id>10003752.10003766.10003771</concept_id>
<concept_desc>Theory of computation~Grammars and context-free languages</concept_desc>
<concept_significance>300</concept_significance>
</concept>
<concept>
<concept_id>10003752.10003809</concept_id>
<concept_desc>Theory of computation~Design and analysis of algorithms</concept_desc>
<concept_significance>300</concept_significance>
</concept>
</ccs2012>
\end{CCSXML}

\ccsdesc[500]{Information systems~Database query processing}
\ccsdesc[300]{Information systems~Resource Description Framework (RDF)}
\ccsdesc[300]{Theory of computation~Grammars and context-free languages}
\ccsdesc[300]{Theory of computation~Design and analysis of algorithms}

%%
%% Keywords. The author(s) should pick words that accurately describe
%% the work being presented. Separate the keywords with commas.
\keywords{graph path queries, context-free grammars, RDF}

\maketitle

\section{Introduction}
Processing a Path Query over a Graph Database consists of looking for pairs of vertices such that they are connected by a specified path inside the graph.
The labels of the edges in a path form a string and, as such, they can be specified by using grammars or other formal tools.
Regular Expressions have been widely used to define path queries.
As regular languages belong to the most restricted class of formal languages, the expressivity of such queries is somehow limited.
Recent studies have developed algorithms for supporting the use of context-free grammars in path queries in order to improve their expressiveness.

% Recent efforts on publishing content comprehensible both for humans and machines are being driven by Linked Data\footnote{\url{https://www.w3.org/wiki/LinkedData}}.
% The term refers to a set of best practices for publishing data on the Web.
% Its objective is to provide standards for representing structured data and making it publicly available.
% To do so, they adopt IRIs to globally identify resources and allow the stating of relationships between them in a way similar to what happens in web pages.
% That allows different databases to be related, what results on a global data graph.

RDF (Resource Description Framework) is the Linked Data standard for representing data.
An RDF database consists on a set of triples that can be viewed as a graph.
The standard query language for RDF databases is SPARQL.
The language supports the definition of paths using regular expressions over labels of edges in the graph.
However, some applications require more sophisticated queries, which cannot be defined using regular expressions, but may be described by context-free grammars.

In the last few years,  a number of initiatives were developed to improve the expressiveness of SPARQL and path query languages in general.
most of these initiatives include de definition of algorithms for the evaluation of context-free path queries.
Such algorithms are, in general, based on parsing techniques.
%Each of these algorithms has its strengths and weaknesses concerning to performance, restrictions on the grammar, direction of path recognition, etc.
In this paper we present a new approach that, while it is not based on a specific parsing technique, it uses annotations over grammar items to parse several paths at the same time, keeping track of shared prefixes over these paths.

Our main contributions are:
\begin{itemize}
    \item an algorithm for evaluation of context-free path queries;
    \item an analysis of correctness, as well as time and space complexity for the algorithm;
    \item experimental results that demonstrate its applicability in different scenarios.
\end{itemize}

\section{Grammars, Data Graphs and Queries}
This section briefly presents some basic background that is used in the paper.

%\subsection{Formal Languages Theory}
%\label{sec:formal-languages}
\begin{definition}[Grammar] A \emph{context-free grammar} is a quadruple $G=(\nonterminals,\terminals,\rules,S)$ where $\nonterminals$ is the set of non-terminal symbols, $\terminals$ is the set of terminal symbols (alphabet), $\rules$ is the set of production rules in the form $\Rule{A}{\alpha}$, for $A \in \nonterminals$ and $\alpha \in (\nonterminals \cup \terminals)^*$, and $S \in \nonterminals$ is the start symbol.
\end{definition}

%Given a non-terminal symbol $A$, we denote by $\L(A)$ the language of $A$, which is the set of terminal strings produced by all derivations starting with $A$:
%$$\L(A) = \set{s ~|~ A \derives s}$$
%The language of a grammar corresponds to the language of its start symbol.

%We denote by $\L(A)^{-1}$ the inverse of $\L(A)$, such that:
%$$a_1 a_2 \dots a_k \in \L(A) \iff a_k^{-1} a_{k-1}^{-1} \dots a_1^{-1} \in \L(A)^{-1}$$
%We define the inverse of $A$, denoted by $A^{-1}$, as a non-terminal symbol with the inverted rules of $A$.
%An inverted rule is defined as:
%$$(\Rule{A}{\alpha_1~\alpha_2~\dots~\alpha_k})^{-1} = \Rule{A^{-1}}{\alpha_k^{-1}~\alpha_{k-1}^{-1}~\dots~\alpha_1^{-1}}$$
%where $\alpha_i \in \terminals \cup \nonterminals$.
%If $\alpha_i$ is a non-terminal, we need to recursively invert its rules in the same way.
%The new non-terminal symbol $A^{-1}$, then, generates the language $\L(A^{-1}) = \L(A)^{-1}$.

%\subsection{Context-Free Path Queries}

We are interested in querying graph databases, represented using RDF.
An RDF graph is made of resources and the relationships between them.
A resource may be in one of the following pairwise disjoint sets:
\begin{itemize}
\item Internationalized Resource Identifiers (IRIs), which are an extension of Uniform Resource Identifiers (URIs) with support to a wider range of Unicode characters.
IRIs uniquely identify resources such as documents, movies or users' profiles in social networks;
\item literals, which specify a literal value such as a text, number or date; or
\item blank nodes, which are equivalent to labeled null values.
\end{itemize}

The relationships between resources are expressed in the form of \emph{triples}.
A triple is denoted by $(s,p,o)$, where $s$ is the \emph{subject}, $p$ is the \emph{predicate} and $o$ is the \emph{object}.
The subject of a triple is either an IRI or a blank node; the predicate (also known as the \emph{property}) is an IRI; and the object is either IRI, a literal or a blank node.
A finite set of triples forms an RDF database, which corresponds to a graph.
%6.3 Graph Equivalence
%Two RDF graphs G and G' are equivalent if there is a bijection M between the sets of nodes of the two graphs, such that:

%    M maps blank nodes to blank nodes.
%    M(lit)=lit for all RDF literals lit which are nodes of G.
%    M(uri)=uri for all RDF IRI references uri which are nodes of G.
%    The triple ( s, p, o ) is in G if and only if the triple ( M(s), p, M(o) ) is in G'
% With this definition, M shows how each blank node in G can be replaced with a new blank node to give G'.
\begin{definition}[Graph]
A \emph{graph} is a set of triples in $V \times E \times V$, where $V$ is a set of vertices and $E$ is a set of edge labels.
In RDF, it is possible that $V \cap E \neq \emptyset$.
\end{definition}

We can specify paths inside a graph by adequately choosing a sequence of triples.
\begin{definition}[Path and Trace]
A \emph{path} is a sequence of triples $(t_1,t_2,..t_k)$ from a given graph, where $t_i = (s_i,p_i,o_i)$, such that $o_i = s_{i+1}$.
The \emph{trace} of a path is the string formed by the concatenation of the edge labels $p$ from its triples.
The set of paths between two vertices $x$ and $y$ is denoted by $\paths(x,y)$. 
Notice that this includes the empty path between one node and itself.
Given a set of paths $\Pi \subseteq \paths(x, y)$, the set of traces defined by these paths is denoted as $\traces(\Pi)$.

%
%Given two vertices $x$ and $y$, we denote the possibly infinite set of all paths starting at $x$ and ending at $y$ by
%\label{def:helper-functions}
%$$paths(x,y) = \set{\pi ~|~ \pi = ((x,e_1,\_),...,(\_,e_k,y)) \in D}$$
%Additionally, $((x,\epsilon,x)) \in \paths(x,x)$.
%
%Given a set of paths $\Pi$, we denote the set of traces corresponding to those paths by
%$$\traces(\Pi) = \set{e_1 \dots e_k ~|~ ((x,e_1,\_),...,(\_,e_k,y)) \in \Pi}$$
\end{definition}

\begin{definition}[Context-Free Path Query]
\label{def:cfpq}
Given a data graph $D$ and a context-free grammar $G$, a \emph{context-free path query} $Q$ is a set of query pairs $(x,A)$ where $x$ is a vertex of the graph and $A$ a non-terminal symbol from a given grammar.
The evaluation of a context-free path query $Q$ produces the set of all vertexes $y$ such that there exists a path from $x$ to $y$ whose trace $s$ is derivable by $A$.
$$Eval(Q) = \set{y ~|~ \exists s\ .\ A \derives s \wedge s \in \traces(\paths(x,y))}$$
\end{definition}

%\subsection{Comments on the Notation Used in This Paper}
%\ciro{Não é necessário, mas talvez ajude no entendimento.}
%
%\ciro{Achei essa tabela em outro artigo. Acho que ficaria melhor do que uma sub-seção.}
%\begin{figure}
%    \centering
%    \includegraphics[width=.5\textwidth]{notation}
%    \caption{Example Table of Notation}
%    \label{fig:my_label}
%\end{figure}

\section{Context-Free Path Query Evaluation}

The next definition establishes the set of vertices that are reachable from a given vertex, by following a path represented by a string of (terminal and non-terminal) symbols of a grammar.

\begin{definition}[$G$-Reachable vertices]
\label{def:relation}
Let $G=\tuple{\nonterminals,\terminals,P, S}$ be a grammar, and $D \subseteq V\times E \times V$ be a data graph.
Given a vertex $x\in V$ and a string $\alpha \subseteq (\terminals \cup \nonterminals)^*$, the function $\R{x}{\alpha}$ defines the set of vertices reachable from $x$ by following an $\alpha$-derivable path in $D$:
$$\R{x}{\alpha}\ :\ V \times (\terminals \cup \nonterminals \cup \set{\epsilon})^* \mapsto \pow{V}.$$

\noindent This function is recursively defined on $\alpha$, as follows:
\begin{enumerate}
\item For $\alpha = \varepsilon$ (the empty string), each vertex is reachable from itself:  $\R{x}{\varepsilon} = \set{x}$. \label{def:relation-empty}

\item For $\alpha = p \in \terminals$, the set of vertices reachable from $x$ via a $p$-labeled edge is $\R{x}{p} = \set{y ~|~ (x,p,y) \in D}$. \label{def:relation-terminal}

\item
If $\alpha = A \in \nonterminals$, the set of vertices reachable from $x$ is defined by using the right-hand side of the productions of $A$ in $G$:  
$$\R{x}{A} = \bigcup_{\Rule{A}{\alpha} \in \rules} \R{x}{\alpha}.$$
\label{def:relation-nonterminal}
    
\item
If $\alpha=\alpha_1 \alpha_2$, the set of vertices reachable from $x$ is defined as:
$$\R{x}{\alpha_1 \alpha_2} = \bigcup_{w \in \R{x}{\alpha_1}} \R{w}{\alpha_2}.$$ 
\label{def:relation-string}
\end{enumerate}
It is easy to verify that this function is associative, since string concatenation and set union are both  associative operations.
\end{definition}

The following property establishes that for any vertex $y$, $G$-reachable from $x$, there exists a path in the graph whose labels form a string generated by the grammar $G$.

\begin{proposition}[Derivation of traces for paths in the graph]
\label{prop:relations}

Given a grammar $G=\tuple{\nonterminals,\terminals,P, S}$, a data graph $D \subseteq V\times E$, two nodes $x, y\in V$ and a string $\alpha \subseteq (\terminals \cup \nonterminals)^*$, we have that $y$ is in $\R{x}{\alpha}$ if and only if there is a $\alpha$-derivable path in $D$ from $x$ to $y$:
\begin{align*}
\forall x,y,\alpha.\  (y \in \R{x}{\alpha} \iff \exists s\ .
        &  \alpha \derives s\  \wedge \\ %\label{prop:relations} \\
        &s \in \traces(\paths(x,y))) \notag
\end{align*}

\begin{proof}% [Proposition~\ref{prop:relations}]
Assuming $s=p_1...p_m$, we proceed by induction on two variables, $m$ and $n$, representing respectively the length of the string $s$ and the number of steps in the derivation $\alpha \derivesk{n} p_1 ... p_m$.
\begin{itemize}
\item Base case (with $n=0, m=0$): \label{it:bc}

In this case, $\alpha \derivesk{0} \epsilon$ and $s = \alpha = \epsilon$.
By \defref{relation}{empty}, we also know that $y=x$ since $y \in \R{x}{\epsilon} = \set{x}$.
We need to show that
\begin{align*}
\forall x.\  (x \in \R{x}{\epsilon} \iff \epsilon \in \traces(\paths(x,x))) 
\end{align*}

This is straightforward since $\epsilon \in \traces(\paths(x,x))$.

Notice that when $m=0$, we have to build derivations from the empty string. 
So, $m=0 \implies n=0$.

\item Inductive step on $m$ (with $n=0$):
In this case, we have that $s = \alpha$, so we must prove that
\begin{align*}
\forall x,y,s.\  (y \in \R{x}{s} \iff   s \in \traces(\paths(x,y))) 
\end{align*}

This follows by mathematical induction on $m$.

% \begin{flalign}
% \forall x,y,p_{1 \dots m} (y \in \Res(x,p_{1 \dots m}) \iff 
% p_{1 \dots m} \in \traces(\paths(x,y)))  \label{eq:ind-step-m}
% \end{flalign}
% for an arbitrary $m$ (we still omit $p_{1 \dots m} \derivesk{0} p_{1 \dots m}$ since it is trivial).

% We assume as induction hypothesis that
% \begin{flalign}
% &\forall x,w,p_{1 \dots m-1} (w \in \Res(x,p_{1 \dots m-1}) \iff \notag \\
% &\exists p_{1 \dots m-1}\ .\ p_{1 \dots m-1} \in \traces(\paths(x,w))) \label{eq:ih-m}
% \end{flalign}

% When demonstrating the ``if'' direction, we know that $y \in \Res(x,p_{1 \dots m})$, which, by \defref{relation}{string}, is $\Res(x,p_{1\dots m}) = \bigcup_{w \in \Res(x,p_{1\dots m-1})} \Res(w,p_m)$ and $\Res(w,p_m) = \set{y | (w,p_m,y) \in D}$.
% By the induction hypothesis~(\ref{eq:ih-m}), we know there exists a path $\pi \in \paths(x,w)$ such that $\traces(\set{\pi}) = \set{p_{1\dots m-1}}$.
% The concatenation of $\pi$ with a triple $(w,p_m,y)$ produces a new path $\pi' \in \paths(x,y)$ such that $\traces(\set{\pi'}) = \set{p_{1\dots m}}$, so $p_{1\dots m} \in \traces(\paths(x,y))$.

% On the other hand, when demonstrating the ``only if'' direction, we know that $\pi'$ defined above is in $\traces(\paths(x,y))$.
% By \defref{relation}{string}, $\Res(x,p_{1\dots m}) = \bigcup_{w \in \Res(x,p_{1\dots m-1})} \Res(w,p_m)$.
% The induction hypothesis~(\ref{eq:ih-m}) warranties that $w \in \Res(x,p_{1 \dots m-1})$.

% Therefore $y \in \Res(x,p_{1\dots m})$, and \ref{eq:ind-step-m} holds.

\item Inductive step on $n$ (with $m > 0$):
We need to demonstrate that 

\begin{align*}
\forall x,y,\alpha.\  (y \in \R{x}{\alpha} \\ \iff &\exists\  p_1...p_m\ .
          \alpha \derivesk{n} p_1...p_m \\
        &\wedge p_1...p_m \in \traces(\paths(x,y))) 
\end{align*}
for an arbitrary $n$.

Since $n>0$, we have that $\alpha = \alpha_1\ A\ \alpha_2$, where $A\in \nonterminals$ and $\alpha_1, \alpha_2 \in (\nonterminals \cup \terminals)^*$. 
By Induction Hypothesis, we have that there exist vertices $v, w \in V$ and indexes $k, j$ where $0 \leq k \leq j \leq m$ such that:

\begin{align*}
v \in \R{x}{\alpha_1} \iff &
          \alpha_1 \derives p_1...p_k \\
        &\wedge p_1...p_k \in \traces(\paths(x,v)) \\[2mm]
w \in \R{v}{A} \iff &
          A \derives p_{k+1}...p_j \\
        &\wedge p_{k+1}...p_j \in \traces(\paths(v,w)) \\[2mm]
y \in \R{w}{\alpha_2} \iff & \alpha_2 \derives p_{j+1}...p_m \\
        &\wedge p_{j+1}...p_m \in \traces(\paths(w,y)) 
\end{align*}
These hypotheses, together with \defref{relation}{string} allow us to conclude the proof.

\end{itemize}
%Thus, Proposition~\ref{prop:relations} holds for all $m$ and $n$.
\end{proof}

\end{proposition}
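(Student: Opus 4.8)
The plan is to prove the biconditional by a single simultaneous induction, exploiting the fact that both directions share the same case analysis on how $\alpha$ relates to the derived terminal string $s = p_1\cdots p_m$. The natural temptation is to induct on the syntactic structure of $\alpha$ following the four clauses \defref{relation}{empty}--\defref{relation}{string}, but this fails: clause~\defref{relation}{nonterminal} rewrites $\R{x}{A}$ in terms of the right-hand sides of $A$'s productions, which may again mention $A$ (directly or through a cycle), so $\alpha$ does not decrease. I would therefore induct on the lexicographic pair $(m,n)$, where $m = |s|$ is the length of the terminal string and $n$ is the number of steps in $\alpha \derivesk{n} s$. This is exactly the measure used to organise the statement, and it is well-founded even for recursive rules because each unfolding of a production consumes a derivation step.

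First I would dispatch the base case $m=0$. Since the only way to derive the empty string is the empty derivation, $m=0$ forces $n=0$ and $\alpha = \varepsilon$. By \defref{relation}{empty} we have $\R{x}{\varepsilon} = \set{x}$, while the only path with empty trace from $x$ is the empty path, which lies in $\paths(x,x)$; the two sides agree, both holding precisely when $y=x$. Next, for $n=0$ and $m>0$ the string $\alpha = s = p_1\cdots p_m$ is purely terminal, and I would run a secondary induction on $m$: peel off the last symbol using clause~\defref{relation}{string}, reducing to $\R{x}{p_1\cdots p_{m-1}}$ composed with the single-edge case $\R{w}{p_m}$ from \defref{relation}{terminal}, which matches a path being extended by one triple.

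The heart of the argument is the inductive step $n>0$. Here $\alpha$ must contain a nonterminal, so I write $\alpha = \alpha_1\,A\,\alpha_2$ and factor the derivation: the terminal string splits as $p_1\cdots p_k$ (from $\alpha_1$), $p_{k+1}\cdots p_j$ (from $A$), and $p_{j+1}\cdots p_m$ (from $\alpha_2$), with $0 \le k \le j \le m$. Correspondingly I would split any path from $x$ to $y$ with this trace at the intermediate vertices $v$ and $w$; this split is well-defined because consecutive triples in a path satisfy $o_i = s_{i+1}$, so the boundary vertices are determined. Each of the three sub-derivations has strictly smaller measure than the original --- the $A$-factor because resolving its first production costs a step, the two flanks because they derive shorter strings --- so the induction hypothesis yields the three component equivalences (for $\alpha_1$, $A$, and $\alpha_2$) at the vertices $v$ and $w$. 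Combining the two flanks via the concatenation clause~\defref{relation}{string} and the middle factor via the union over productions in clause~\defref{relation}{nonterminal}, and relying on the associativity remarked after \defref{relation}{string}, reassembles exactly $y \in \R{x}{\alpha}$ on one side and the existence of the full $\alpha$-derivable path on the other.

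I expect the main obstacle to be the bookkeeping around this decomposition rather than any deep idea: I must check that the derivation always factors so that $A$'s subderivation is independent of its context (which holds because the grammar is context-free), that the chosen measure strictly decreases in every branch (the subtle point being the $A$-factor, where it is the step count $n$ and not the string length that drops), and that the path-splitting and trace-concatenation are genuinely in bijection with the string split, so that the witness $s$ --- constructed in the $\Rightarrow$ direction, given in the $\Leftarrow$ direction --- is transported faithfully through both halves of the biconditional.
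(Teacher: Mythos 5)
Your proposal follows essentially the same route as the paper's proof: a lexicographic induction on $(m,n)$, the same base case $m=n=0$, the same secondary induction on $m$ for the all-terminal case $n=0$, and the same three-way decomposition $\alpha = \alpha_1\,A\,\alpha_2$ with intermediate vertices $v,w$ and split indices $k,j$ in the step on $n$, reassembled via \defref{relation}{nonterminal} and \defref{relation}{string}. One small correction to your bookkeeping (which is otherwise more careful than the paper's): the flanks $\alpha_1,\alpha_2$ need not derive strictly shorter strings (e.g.\ when $A$ and $\alpha_2$ both derive $\varepsilon$), but their derivations use strictly fewer than $n$ steps since expanding $A$ consumes at least one, so the lexicographic measure still decreases.
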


%The proof of this proposition is by induction on the size of $s$ and the number of steps of the derivation $\alpha \derives s$. 
%The proof is presented in Appendix~\ref{sec:proofs} \ciro{Appendices are not allowed!}.

% \bigskip 
% The next section presents an algorithm to implement the relations defined in Definition~\ref{def:relation}.

\subsection{Our Algorithm}

In this section we present our proposal for the evaluation of CFPQs.
Our algorithm receives a grammar, a data graph and a query, and follows context-free paths inside the data graph.
The goal of the algorithm is to identify pairs of vertices linked by paths whose traces are strings generated by the grammar.

The following example illustrates the problem:
\begin{example}
\label{ex:example1}
Let us consider a grammar $G$ with the following production rules: % nomeei a gramática aqui pra referenciar no exemplo 3.5
\[\Rule{S}{a\ S\ b} \qquad \Rule{S}{\varepsilon}
\]
and the data graph given in Figure~\ref{fig:exemplo1}.
\begin{figure}%[b]
\centering
\begin{tikzpicture}[level/.style={sibling distance=20mm/#1}, inner sep=5pt]

\node[draw, rectangle, rounded corners=5pt] at (0,0) (1) {\texttt{1}};
\node[draw, rectangle, rounded corners=5pt] at (1,1.5) (2) {\texttt{2}};
\node[draw, rectangle, rounded corners=5pt] at (2,0) (3) {\texttt{3}};
\node[draw, rectangle, rounded corners=5pt] at (4,0) (4) {\texttt{4}};

\draw [->, >=stealth] (1) to node [auto] {$a$} (2);
\draw [->, >=stealth] (1) to node [auto] {$a$} (3);
\draw [->, >=stealth] (2) to node [auto] {$b$} (3);
\draw [->, >=stealth, bend left=35] (3) to node [auto] {$a$} (1);
\draw [->, >=stealth] (3) to node [auto] {$b$} (4);

\end{tikzpicture}
\caption{Example Graph.}
\label{fig:exemplo1}
\end{figure}

Given the query $Q=\set{\tuple{1, S}, \tuple{3, S}}$, our algorithm goes through paths starting at vertices 1 and 3 whose trace is generated by $S$. %  looks for vertices that can be reached from vertices 1 and 3, via paths whose traces are generated by $S$.
In this way all the production rules of $S$ will be investigated for paths starting at each of these vertices.

For the query $Q$, our algorithm will compute the sets of vertices $\set{1, 3, 4}$, reachable from node 1, and the set $\set{3, 4}$, reachable from node 3.
~\hfill$\diamond$
\end{example}

Our method relies on two assumptions: 
\textit{(i)} there may be several paths starting at a given node of the data graph; and
\textit{(ii)} for each of these paths, their trace may be derivable from a non-terminal of the grammar.

Our algorithm explores these two properties to parse all the paths from a given vertex, in order to discover which of them have traces derivable by a given non-terminal.
The parsing of all these traces is performed in an incremental way.
In our setting, a query $Q$ is represented by a set of pairs $\tuple{v, A}$, where $v$ is a vertex of the data graph and $A$ is a non-terminal symbol of the grammar.
For each pair $\tuple{v, A}$ of the query, our  algorithm identifies all the paths from $v$ whose traces are strings derivable from $A$.

\bigskip

In a traditional parsing setting, we may use the notion of \textit{grammar item} to guide the parsing process.
Grammar items use a dot on the right-hand side of a production rule to mark the progress of the parsing.
Traditional parsing techniques are tailored to process one input string at a time.
The information carried by the dot is related just to the progress of the parsing.
In our case, we also need to identify the strings that form paths of the graph being parsed. 
Thus, we associate vertices of the graph to the positions of the parsing process.
In our case, we will use sets of vertices of the graph within the items, in the place where the dot may appear.
The next definition captures this idea:

\begin{definition}[Trace Item] \label{def:trace-item} 
Given a context-free grammar $G = \tuple{N, \Sigma, P, S}$ and a data graph $D \subseteq V \times E \times V$, a \textit{Trace Item} is a pair formed by a production rule and a function associating a set of graph nodes to each position of the right-hand side of the rule.
Formally, a trace item is defined as the pair
$\tuple{\Rule{A}{\alpha}, f}$,
where $\Rule{A}{\alpha}\in P$ and $f: \{0,\dots,|\alpha|\} \rightarrow \pow{V}$.

The trace item  $\tuple{\Rule{A}{\alpha_1,\dots,\alpha_n}, f}$, where $f=\{0\mapsto C_0,\dots,n\mapsto C_n\}$ will be noted as
$[\Rule{A}{C_0 ~\alpha_1~ C_1 ~... ~\alpha_n~ C_n}]$.
The sets $C_1,\dots,C_n$ will be called \emph{position sets}.
~\hfill$\diamond$
\end{definition}

In general, given position sets $C_1, C_2$ and a grammar symbol $\alpha$, a sequence $C_1\ \alpha\ C_2$ in the right-hand side of an item indicates that each vertex in $C_2$ will be reached by an $\alpha$-derivable path beginning at a vertex in $C_1$.
For instance, the trace item $[\ \Rule{S}{\set{1}\ a\ \set{2, 3}\ S\ \set{~}\ b\ \set{~}}\ ]$ in Example~\ref{ex:example1}, indicates that the parsing process is in a stage where $a$-derivable paths linking vertex 1 to vertices 2 and 3 in the data graph have been identified.

Next, we present the intuitive idea of our algorithm.
In order to solve a query $Q$, our algorithm will start processing trace items obtained from the query pairs and rules of the grammar: for each query pair $(v,A) \in Q$, we create one trace item for each production rule of $A$ with $v$ in its first position set.
We will use special marks $\u$ and $\m$ for unprocessed and processed vertices inside position sets, respectively, in order to keep track of what vertices have already been processed\footnote{We omit the $\m$ and $\u$ marks from vertices in position sets when such distinction is unnecessary.}.
Our algorithm will process trace items until there are no unprocessed vertices belonging to any position set.

The next example shows how to compute the answers for the given query, graph and grammar.

%Such marks are important because the order of processing the paths is arbitrary.

%For each pair $(v, A)$ in the query we will look for paths starting at $v$, for each rule of $A$.

\begin{example}
\label{ex:example2}%[Example~\ref{ex:example1} continued]
Given the query $Q=\set{\tuple{1, S}, \tuple{3, S}}$ and data graph $D$ and grammar $G$ from Example~\ref{ex:example1}, we start the parsing process by creating trace items.
For each query pair $(v,A) \in Q$, we create one trace item for each production rule of $A$ with $v$ in its first position set.
%Such marks are important because the order of processing the paths is arbitrary.
For the query $Q$ we build the trace items:
\begin{eqnarray}
    {[}\ S&\rightarrow& \set{1\u}\ a\ \set{~}\ S\ \set{~}\ b\ \set{~}\ {]}\label{eq:ex2:1}\\
    {[}\ S&\rightarrow& \set{1\u}\ {]}\label{eq:ex2:2}\\
    {[}\ S&\rightarrow& \set{3\u}\ a\ \set{~}\ S\ \set{~}\ b\ \set{~}\ {]}\label{eq:ex2:3}\\
    {[}\ S&\rightarrow& \set{3\u}\ {]} \label{eq:ex2:4}
\end{eqnarray}

Our algorithm picks the unprocessed vertices in an arbitrary order.
Let us start with vertex 1 from trace item~(\ref{eq:ex2:1}).
This vertex appears in a position set before the terminal symbol $a$.
We must walk from vertex 1 to all its neighbors linked by an $a$-labeled edge in $D$.
The neighbors vertices 2 and 3 must then be added to the next position set in the trace item.
Doing so, our item will become $[\ \Rule{S}{\set{1\m}\ a\ \set{2\u,3\u}\ S\ \set{~}\ b\ \set{~} }\ ]$.
Notice that vertex 1$\u$ has changed to 1$\m$ to signal that this vertex has been processed. 
New vertices are added as unprocessed by using the mark $\u$.
%This means that vertex 1 have already been processed and now we have two new vertices, 2 and 3, to process.
Now we may pick vertex 2 for the next step.
This vertex is in a position set before the non-terminal symbol $S$.
That indicates that we have to look for $S$-derivable paths starting at vertex 2.
We build the following new items:
\begin{eqnarray}
    {[}\ S&\rightarrow& \set{2\u}\ a\ \set{~}\ S\ \set{~}\ b\ \set{~}\ {]} \label{eq:ex2:5}\\
    {[}\ S&\rightarrow& \set{2\u}\ {]} \label{eq:ex2:6}
\end{eqnarray}

Now item~(\ref{eq:ex2:1}) becomes $[\ \Rule{S}{\set{1\m}\ a\ \set{2\m,3\u}\ S\ \set{~}\ b\ \set{~} }\ ]$ and we have to pick another vertex to process.
Picking vertex 2 from item~(\ref{eq:ex2:5}) we verify that there is no $a$-labeled edge going from vertex 2 to any other vertex in the graph.
That means that there is no $a$-derivable path from this vertex.
Item~(\ref{eq:ex2:5}) then becomes $[\ \Rule{S}{\set{2\m}\ a\ \set{}\ S\ \set{}\ b\ \set{}}\ ]$.

Let us now pick vertex 2 from item~(\ref{eq:ex2:6}).
This item was built from an $\epsilon$-rule. 
As the vertex 2 belongs to the first and last position set of this item, that means that there is a $S$-derivable path from vertex 2 to itself (the empty path).
So, we augment the data graph with an $S$-labelled edge (shown in \textbf{boldface}): 
\begin{center}
\begin{tikzpicture}[level/.style={sibling distance=20mm/#1}, inner sep=5pt]

\node[draw, rectangle, rounded corners=5pt] at (0,0) (1) {\texttt{1}};
\node[draw, rectangle, rounded corners=5pt] at (1,1.5) (2) {\texttt{2}};
\node[draw, rectangle, rounded corners=5pt] at (2,0) (3) {\texttt{3}};
\node[draw, rectangle, rounded corners=5pt] at (4,0) (4) {\texttt{4}};

\draw [->, >=stealth] (1) to node [auto] {$a$} (2);
\draw [->, >=stealth] (1) to node [above] {$a$} (3);
\draw [->, >=stealth] (2) to node [auto] {\!\!$b$} (3);
\draw [->, >=stealth] (3) to node [auto] {$b$} (4);
\draw [->, >=stealth, bend left=35] (3) to node [below] {$a$} (1);

% loops
\path (2) edge [loop right,->, >=stealth, thick] node {$S$}  (2);
\end{tikzpicture}
\end{center}

Now, item~(\ref{eq:ex2:6}) becomes $[\ \Rule{S}{\set{2\m}}\ ]$.
The addition of the new, $S$-labelled edge to the data graph triggers a modification to the existing items: 
we add the unprocessed vertex 2 to any position set $C$ appearing in a trace item matching the pattern $[\ \dots \set{2,\dots}\ S\ C\ \dots \ ]$.
In our case, item~(\ref{eq:ex2:1}) becomes $[\ \Rule{S}{\set{1\m}\ a\ \set{2\m,3\u}\ S\ \set{2\u}\ b\ \set{~} }\ ]$.

We may now pick the newly added vertex 2$\u$ in item~(\ref{eq:ex2:1}).
Now we have a vertex in a position set before the terminal $b$.
As we did before, we look for $b$-labeled edges going out from 2 in the data graph.
There is only one such edge, which arrives at vertex 3.
Item~(\ref{eq:ex2:1}) then becomes $[\ \Rule{S}{\set{1\m}\ a\ \set{2\m,3\u}\ S\ \set{2\m}\ b\ \set{3\u} }\ ]$.

Now we pick the newly added vertex 3 in the last position set of item~(\ref{eq:ex2:1}).
As this vertex is at the last position set of the item, we infer that there is an $S$-valid path from vertex 1 to vertex 3.
As $\tuple{1, S} \in Q$, we have found one answer for our query.
Item~\ref{eq:ex2:1} then becomes $[\ \Rule{S}{\set{1\m}\ a\ \set{2\m,3\u}\ S\ \set{2\m}\ b\ \set{3\m} }\ ]$.
Then, the data graph is augmented with a new $S$-labelled edge from 1 to 3:
\begin{center}
\begin{tikzpicture}[level/.style={sibling distance=20mm/#1}, inner sep=5pt]

\node[draw, rectangle, rounded corners=5pt] at (0,0) (1) {\texttt{1}};
\node[draw, rectangle, rounded corners=5pt] at (1,1.5) (2) {\texttt{2}};
\node[draw, rectangle, rounded corners=5pt] at (2,0) (3) {\texttt{3}};
\node[draw, rectangle, rounded corners=5pt] at (4,0) (4) {\texttt{4}};

\draw [->, >=stealth] (1) to node [auto] {$a$} (2);
\draw [->, >=stealth,thick] (1) to node [above] {$a, S$} (3);
\draw [->, >=stealth] (2) to node [auto] {\!\!$b$} (3);
\draw [->, >=stealth] (3) to node [auto] {$b$} (4);
\draw [->, >=stealth, bend left=35] (3) to node [below] {$a$} (1);

% loops
\path (2) edge [loop right,->, >=stealth, thick] node {$S$}  (2);
\end{tikzpicture}
\end{center}

This process is repeated until there are no more unprocessed vertices.
The complete step-to-step process is presented in Table~\ref{tab:example2-complete}.
That will result in the following set of items:
\[\begin{array}{l}
\Item{S}{\set{1\m} ~a~ \set{2\m,3\m} ~S~ \set{2\m,3\m,4\m} ~b~ \set{3\m,4\m}}, \ \ 
\Item{S}{\set{1\m}}, \\
\Item{S}{\set{2\m}~a~ \emptyset ~S~ \emptyset ~b~  \emptyset}, \qquad\qquad\qquad\ \ 
\Item{S}{\set{2\m}}, \\
\Item{S}{\set{3\m} ~a~ \set{1\m} ~S~ \set{1\m,3\m,4\m} ~b~  \set{4\m}}, \qquad 
\Item{S}{\set{3\m}}
% Obs.: não são criados items começando em 4 porque não tá na query Q e ninguém começa nenhuma derivação com S lá
\end{array}
\]

\begin{table}[htpb]
\newcounter{actioncounter}
\newcommand{\nxtact}{%
	\stepcounter{actioncounter}%
	\theactioncounter} 

\centering \small
\begin{tabular}{@{}cll@{}}
\toprule
\textbf{\#}            & \textbf{Operation}                          & \textbf{Updated items}        \\ \midrule

\nxtact  & line~\ref{l:let-I} & \Item{S}{\set{\underline{1\u}}\ a\ \set{~}\ S\ \set{~}\ b\ \set{~}}, \Item{S}{\set{\underline{1\u}}}, \\
& & \Item{S}{\set{\underline{3\u}}\ a\ \set{~}\ S\ \set{~}\ b\ \set{~}}, \Item{S}{\set{\underline{3\u}}}
\\ \midrule

\nxtact & line~\ref{l:advance} & \Item{S}{\set{\underline{1\m}}\ a\ \set{\underline{2\u},\underline{3\u}}\ S\ \set{~}\ b\ \set{~}}
\\ \midrule

\nxtact & line~\ref{l:update-I} & \Item{S}{\set{1\m}\ a\ \set{\underline{2\m},3\u}\ S\ \set{~}\ b\ \set{~}}, \\
& & \Item{S}{\set{\underline{2\u}}\ a\ \set{~}\ S\ \set{~}\ b\ \set{~}}, \Item{S}{\set{\underline{2\u}}}
\\ \midrule

\nxtact & line~\ref{l:advance} & \Item{S}{\set{\underline{2\m}}\ a\ \set{~}\ S\ \set{~}\ b\ \set{~}}
\\ \midrule

\nxtact & lines~\ref{l:update-D'}, \ref{l:notify} & \Item{S}{\set{\underline{2\m}}}, \\
& & \Item{S}{\set{1\m}\ a\ \set{2\m,3\u}\ S\ \set{\underline{2\u}}\ b\ \set{~}}
\\ \midrule

\nxtact & line~\ref{l:advance} & \Item{S}{\set{1\m}\ a\ \set{2\m,3\u}\ S\ \set{\underline{2\m}}\ b\ \set{\underline{3\u}}}
\\ \midrule

\nxtact & lines~\ref{l:update-D'}, \ref{l:notify} & \Item{S}{\set{1\m}\ a\ \set{2\m,3\u}\ S\ \set{2\m}\ b\ \set{\underline{3\m}}}
\\ \midrule

\nxtact & lines~\ref{l:update-D'}, \ref{l:notify} & \Item{S}{\set{\underline{1\m}}}
\\ \midrule

%\nxtact & & \Item{S}{\set{1\m}\ a\ \set{2\m,3\u}\ S\ \set{2\m}\ b\ \set{\underline{3\m}}}
%\\ \midrule

\nxtact & lines~\ref{l:update-D'}, \ref{l:notify} & \Item{S}{\set{\underline{3\m}}} \\
& & \Item{S}{\set{1\m}\ a\ \set{2\m,3\u}\ S\ \set{2\m,\underline{3\u}}\ b\ \set{3\m}}
\\ \midrule

\nxtact & line~\ref{l:advance} & \Item{S}{\set{1\m}\ a\ \set{2\m,3\u}\ S\ \set{2\m,\underline{3\m}}\ b\ \set{3\m,\underline{4\u}}}
\\ \midrule

\nxtact & lines~\ref{l:update-D'}, \ref{l:notify} & \Item{S}{\set{1\m}\ a\ \set{2\m,3\u}\ S\ \set{2\m,3\m}\ b\ \set{3\m,\underline{4\m}}}
\\ \midrule

\nxtact & line~\ref{l:advance} & \Item{S}{\set{\underline{3\m}}\ a\ \set{\underline{1\u}}\ S\ \set{~}\ b\ \set{~}}
\\ \midrule

\nxtact & line~\ref{l:advance} & \Item{S}{\set{3\m}\ a\ \set{\underline{1\m}}\ S\ \set{\underline{1\u},\underline{3\u},\underline{4\u}}\ b\ \set{~}}
\\ \midrule

\nxtact & line~\ref{l:advance} & \Item{S}{\set{3\m}\ a\ \set{1\m}\ S\ \set{1\u, 3\u,\underline{4\m}}\ b\ \set{~}}
\\ \midrule

\nxtact & line~\ref{l:advance} & \Item{S}{\set{3\m}\ a\ \set{1\m}\ S\ \set{1\u, \underline{3\m},4\m}\ b\ \set{\underline{4\u}}}
\\ \midrule

\nxtact & lines~\ref{l:update-D'}, \ref{l:notify} & \Item{S}{\set{3\m}\ a\ \set{1\m}\ S\ \set{1\u, 3\m,4\m}\ b\ \set{\underline{4\m}}}
\\ \midrule

\nxtact & line~\ref{l:advance} & \Item{S}{\set{3\m}\ a\ \set{1\m}\ S\ \set{\underline{1\m}, 3\m,4\m}\ b\ \set{4\m}}
\\ \midrule

\nxtact & line~\ref{l:update-I} & \Item{S}{\set{1\m}\ a\ \set{2\m,\underline{3\m}}\ S\ \set{2\m,3\m,\underline{4\u}}\ b\ \set{3\m,4\m}}
\\ \midrule

\nxtact & line~\ref{l:advance} & \Item{S}{\set{1\m}\ a\ \set{2\m,3\m}\ S\ \set{2\m,3\m,\underline{4\m}}\ b\ \set{3\m,4\m}}
%\\ \midrule

\\ \bottomrule
\end{tabular}
\normalsize

\caption{Step-by-step behavior of Algorithm~\ref{alg:shared-prefixes}.}
\label{tab:example2-complete}
\end{table}

%Notice that the order in which unprocessed vertices are chosen is not relevant to find a solution for the query.

The solutions computed by our algorithm are shown as \textbf{bold} arrows, labeled by non-terminals, in Figure~\ref{fig:example-result}.
~\hfill$\diamond$

\begin{figure}[h]
\centering
\begin{tikzpicture}[level/.style={sibling distance=20mm/#1}, inner sep=5pt]

\node[draw, rectangle, rounded corners=5pt] at (0,0) (1) {\texttt{1}};
\node[draw, rectangle, rounded corners=5pt] at (1,1.5) (2) {\texttt{2}};
\node[draw, rectangle, rounded corners=5pt] at (2,0) (3) {\texttt{3}};
\node[draw, rectangle, rounded corners=5pt] at (4,0) (4) {\texttt{4}};

\draw [->, >=stealth] (1) to node [auto] {$a$} (2);
\draw [->, >=stealth, thick] (1) to node [above] {$a$,\ $S$} (3);
\draw [->, >=stealth] (2) to node [auto] {\!\!$b$} (3);
\draw [->, >=stealth, thick] (3) to node [auto] {$b$,\ $S$} (4);

\draw [->, >=stealth, bend left=35] (3) to node [below] {$a$} (1);
\draw [->, >=stealth, bend right=60, thick] (1) to node [auto] {$S$} (4);

% loops
\path (1) edge [loop left,->, >=stealth, thick] node {$S$}  (1);
\path (2) edge [loop right,->, >=stealth, thick] node {$S$}  (2);
\path (3) edge [loop above,->, >=stealth, thick] node {$S$}  (3);
%\path (4) edge [loop right,->, >=stealth, thick] node {$S$}  (4);

\end{tikzpicture}
\caption{Result graph for the query of Example~\ref{ex:example1}.}
\label{fig:example-result}
\end{figure}
\end{example}

Let us now present our algorithm for processing context-free path queries (Algorithm~\ref{alg:shared-prefixes}). 
Our technique is based on the idea of building and updating a set of trace items.
The input parameters of the algorithm are:
\begin{enumerate}
\item A context-free grammar $G=\tuple{\nonterminals,\Sigma,P,S}$, defined by the user.
%No restriction (such as normal forms) is needed for the grammar.

\item An RDF graph $D=V\times \Sigma \times V$ with edges restricted to the grammar alphabet.% and enhanced with explicit inverted edges.

\item A set of query pairs $Q \subseteq V \times \nonterminals$.
Each pair of the query set indicates a start vertex and non-terminal symbol used for recognizing paths.
% It is often initialized as $Q = V \times \set{S}$, where $S$ is the start symbol of the given grammar.
% This indicates that the algorithm must walk through paths constrained to the language of the grammar's start symbol   starting at every vertex in the graph.
\end{enumerate}

\begin{algorithm}[htb]
    \KwIn{$G = (\nonterminals,\terminals,\rules, S), ~Q \subseteq V \times \nonterminals, ~D \subseteq V\times \terminals \times V $}

\KwOut{$D' \subseteq V\times \terminals \times V$}

% \SetKwProg{Procedure}{procedure}{}{end}
\SetKwProg{Function}{function}{}{end}
\Function{eval}{
    $\I := \set{[\Rule{A}{\set{w\u} ~\alpha_1~ \emptyset ~... ~\alpha_n~ \emptyset }] ~|~ \Rule{A}{\alpha_1~ ... ~\alpha_n} \in \rules \wedge (w,A) \in Q}$
    \label{l:start-let} \label{l:let-I} \\
    $D' := D$ \label{l:let-D'} \label{l:end-let}
    
    \While{$\exists\ i,x \st i=[\Rule{A}{...~ \set{x\u, ...}~...}] \in \I $
    }{ 
        \label{l:main-loop}
        \inlineSwitch{i}{ \\
        \uCase{$i = [\Rule{A}{... ~\set{x\u, ...} ~\alpha_k ~C_k ~...}]$}{
            \label{l:case-mid}
            \uIf{$\alpha_k \in \terminals \vee [\Rule{\alpha_k}{\set{x} \dots}] \in \I$}{\label{l:in-Res}
                $C_k := C_k ~\munion~ \set{y{\u} ~|~ (x,\alpha_k,y) \in D'} $ \label{l:advance}
            } \Else { 
                  \label{l:notin-Res}
                $\I := \I ~\cup~ \set{[\Rule{\alpha_k}{\set{x\u} ~\beta_1 ~\set{~} ~... ~\beta_n ~\set{~}}] ~|~ \Rule{\alpha_k}{\beta_1 ~...~ \beta_n} \in \rules}$ \label{l:update-I}
            }
            \label{l:end-case-mid}
        }\Case{$i = [\Rule{A}{\set{w} \dots \set{x\u, ...}}]$}{
            \label{l:case-end}
            $D' := D' \cup \set{(w,A,x)}$  \label{l:update-D'}
            
            \ForEach{$[\Rule{B}{\dots \set{w\m,...} ~A~C \dots}] \in \I$}{ \label{l:foreach}
                $C := C ~\munion~ \set{x\u}$  \label{l:notify}
            }
        }
        \label{l:end-case-end}
        }
        $mark(x, i)$ \label{l:mark} 

    }
    \Return $D'$ \label{l:return}
}
	\caption{The Trace Item-based Algorithm}
	\label{alg:shared-prefixes}
\end{algorithm}

Our algorithm uses the $\munion$ operator to perform unions between sets of marked and unmarked vertices.
This operator is defined as follows: given the position sets $C$ and $\set{x\u}$, the union between them is defined as:
\begin{eqnarray*}
C \munion \set{x\u} &=& \left\{ \begin{array}{ll}
C,               & \mathit{if}\ x\m \in C\\
C\cup \set{x\u}, & \mathit{otherwise}
\end{array}\right.
\end{eqnarray*}
That is, if the vertex $x$ has already been processed, it is kept as processed in the position set. 
Otherwise, it is added as unprocessed.

\bigskip

The following data structures are manipulated during the algorithm's execution:
\begin{itemize}
\item[\I:] A set of trace items, iterativelly computed by the algorithm.
    
\item[$D'$:] A data graph $D'$, containing the original data graph $D$ incrementally augmented with  new, non-terminal-labeled edges.
\end{itemize}

Lines~\ref{l:start-let}-\ref{l:end-let} initialize \I\ and $D'$.
For each pair $(w, A) \in Q$ and rule $\Rule{A}{\alpha_1 ... \alpha_n \in \rules}$, the set \I\ is initialized with items $\Rule{A}{\set{w\u} \alpha_1 \emptyset ... \alpha_n \emptyset}$.
The graph $D'$ is initialized as a copy of the input graph $D$.
These steps prepare the algorithm to enter the main loop that processes unmarked vertices in items of \I.
The main loop concludes when there are no such unmarked vertices.

The processing of unmarked vertices is divided into two cases:
\begin{enumerate}
\item In the first case (lines~\ref{l:case-mid}-\ref{l:end-case-mid}), given the trace item $i = [\Rule{A}{C_0\ \alpha_1\ C_1\ ...\alpha_n\ C_n}]$, $x\u$ belongs to a position set $C_{k-1}$ that is not the last position set of the item.
%To advance the recognition, at line~\ref{l:advance} the algorithm adds to position set $C$ following $\alpha$ all vertices $y$ such that $(x,\alpha,y) \in D'$, which are the (possibly incomplete) results for the pair $(w,\alpha) \in H$.
%At this point it can occur that either:
\begin{enumerate}
\item If $\alpha_k \in \terminals$, we add to $C_k$ all the vertices $y\u$ such that there exists an edge $(x,\alpha_k,y) \in D'$  (line~\ref{l:let-D'}).

\item If $\alpha_k \in \nonterminals$ and $\Item{\alpha_k}{\set{w}\dots} \in \I$, we add all $y\u$ to $C_{k}$ such that there is an edge $(x,\alpha_k,y) \in D'$ (this case is also treated by line~\ref{l:let-D'}).

\item If $\alpha_k \in \nonterminals$ and there is no trace item $\Item{\alpha_k}{\set{x\ ...}\ ...}$, our algorithm initiates the search for $\alpha_k$-derivations beginning at $x$.
This is done by creating new trace items $\Rule{\alpha_k}{\set{x\u}\dots}$ and adding them to \I\ (line~\ref{l:update-I}).
\end{enumerate}

\item In the second case of the main loop, lines~\ref{l:case-end} to~\ref{l:end-case-end}, we identify that the vertex $x$ belongs to the last position set of a trace item.
The item $i = [\Rule{A}{\set{w} \dots \set{x\u, ...}}]$ states that we have walked a path from the vertex $w$ to $x$ in the data graph $D'$.
So, our algorithm generates a new $A$-labeled edge connecting these two vertices (line~\ref{l:update-D'}).
After this operation, we must update with $x\u$ all position sets $C$ such that $\Item{B}{\dots \set{w,\dots}\ A\ C \dots} \in \I$ (line~\ref{l:notify}).
\end{enumerate}

The vertex $x\u$ from the generalized item $i$ is marked as visited at the end of the loop body (line~\ref{l:mark}).
When there are no more unmarked vertices, the main loop stops and the decorated graph $D'$ is returned (line~\ref{l:return}).

\bigskip 

In the next sections we analyze the behaviour of our algorithm in terms of correctness and runtime and memory complexity.
% For example, let us suppose a grammar rule $\Rule{S}{A\ B}$.
% The generalized item $\Rule{S}{\set{a\m} ~A~ \set{b\u,c\u} ~B~ \set{~}}$ indicates that at some point of the computation, the vertex $a$ will be the departing point for an $S$-generated path. 
% The vertices in $\set{b, c}$ were reached by an $A$-generated path starting at $a$.
% The empty set at the end-point of the item indicates that (so far) there are no $B$-generated paths starting at $b$ or $c$.

% \begin{figure}
% \input{fig/transitions}
% \begin{eqnarray*}
% \Rule{S}{\set{v\u}}, \Res(v,S)=\set{\dots} &\leadsto& \Rule{S}{\set{v\m}, \Res(v,S) = \set{v,\dots}}
% \end{eqnarray*}
% \caption{Algorithm~\ref{alg:shared-prefixes}'s basic operations}
% \label{alg:shared-prefixes}
% \end{figure}

\subsection{Algorithm Correctness}
In this section, we show the correctness of our algorithm.

\begin{proposition}
\label{prop:I}
Let $G=\tuple{\nonterminals,\terminals,P, S}$ be a grammar, $D \subseteq V \times E \times V$ a data graph and a query pair $\tuple{w, A} \in Q$. 
Given $\Item{A}{\set{w} \alpha_1\ C_1 ... \alpha_j C_j ...} \in \I$ computed by Algorithm~\ref{alg:shared-prefixes}, then for any vertex $x \in V$ we have
$$x \in C_j \iff x \in \R{w}{\alpha_1 \dots \alpha_j}.$$
\end{proposition}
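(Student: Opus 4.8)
The plan is to prove the biconditional as two separate inclusions. I would establish soundness, $x \in C_j \Rightarrow x \in \R{w}{\alpha_1 \dots \alpha_j}$, as an invariant preserved by every iteration of the main loop, and completeness, $x \in \R{w}{\alpha_1 \dots \alpha_j} \Rightarrow x \in C_j$, as a property of the fixpoint reached at termination, using well-founded induction keyed to the derivation/path measure already exploited in Proposition~\ref{prop:relations}. Throughout I treat each position set as a plain set of vertices, disregarding the $\u$ and $\m$ marks.

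For soundness I would strengthen the statement into a joint loop invariant over both data structures: (a) for every item $\Item{A}{\set{w} \alpha_1\ C_1 \dots} \in \I$ and every position $j$, we have $C_j \subseteq \R{w}{\alpha_1 \dots \alpha_j}$; and (b) every non-terminal edge $(w,A,x) \in D' \setminus D$ satisfies $x \in \R{w}{A}$. Both hold after initialization, since seed items carry the singleton $\set{w} = \R{w}{\varepsilon}$ in position zero and empty sets elsewhere, while $D' = D$ has no new edges. I would then check preservation case by case: the advance step (line~\ref{l:advance}) appeals to $\R{w}{\alpha_1 \dots \alpha_k} = \bigcup_{v \in \R{w}{\alpha_1 \dots \alpha_{k-1}}} \R{v}{\alpha_k}$ from \defref{relation}{string}, together with the fact that matched edges are either original terminal edges or sound non-terminal edges from invariant (b); the edge-creation step (line~\ref{l:update-D'}) uses that the last position set of an item for $\Rule{A}{\alpha}$ equals $\R{w}{\alpha} \subseteq \R{w}{A}$; the notify step (line~\ref{l:notify}) is simply an advance along a freshly materialized, sound edge; and newly seeded items (line~\ref{l:update-I}) again introduce only a singleton first position set and empty sets.

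For completeness I would assume the item $\Item{A}{\set{w} \alpha_1\ C_1 \dots}$ is present at termination and prove $\R{w}{\alpha_1 \dots \alpha_j} \subseteq C_j$ by induction on $j$, which reduces to the single-symbol claim: if $v \in C_{j-1}$ (secured by the outer induction) and $x \in \R{v}{\alpha_j}$, then $x \in C_j$ at the fixpoint. Since no vertex remains unprocessed at termination, $v$ is eventually picked and marked, so the terminal subcase follows at once from the advance step applied to the original edge $(v,\alpha_j,x) \in D$. The non-terminal subcase requires a nested, well-founded induction on the number of derivation steps witnessing $x \in \R{v}{\alpha_j}$ (equivalently, on the derivation length of the trace of the witnessing path, as in Proposition~\ref{prop:relations}): processing $v\u$ seeds the items $\Item{\alpha_j}{\set{v\u} \dots}$, the inner hypothesis fills their position sets so that $x$ reaches some last position set, the edge-creation step deposits $(v,\alpha_j,x)$ into $D'$, and then either the notify step or a subsequent advance over that edge places $x$ into $C_j$.

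The main obstacle will be this completeness direction, and within it the mutual recursion between filling the position sets of \I\ and materializing non-terminal edges in $D'$. The crux is choosing a single well-founded measure that strictly decreases when passing from $x \in \R{v}{B}$ to the sub-derivations inside a rule of $B$, so that the nested induction is legitimate; the derivation length of the witnessing trace supplies exactly such a measure. A related subtlety is the bookkeeping of the $\u$/$\m$ marks and of operation ordering: I must show that whether the edge $(v,B,x)$ is created before or after $v$ is marked in $C_{j-1}$, one of the two propagation mechanisms is guaranteed to fire — advance when $v\u$ is processed with the edge already in $D'$, or notify when the edge appears while $v\m$ is already present — so that no obligation is silently dropped at the fixpoint. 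Finally, I would justify that such a fixpoint exists at all: the position sets grow monotonically within the finite vertex set $V$, and only finitely many items $\Item{B}{\set{v\u} \dots}$ can ever be seeded, so the main loop terminates and the correctness statements above describe the genuine final state.
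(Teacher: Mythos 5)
Your proposal is correct, and it is substantially more rigorous than the paper's own argument, which is a line-by-line sketch: the paper simply visits each line that modifies $\I$ (lines~\ref{l:let-I}, \ref{l:update-I}, \ref{l:advance}, \ref{l:notify}) and asserts that the biconditional $x \in C_j \iff x \in \R{w}{\alpha_1\dots\alpha_j}$ holds there by appeal to the corresponding clause of Definition~\ref{def:relation}. That sketch really only substantiates the soundness direction (everything inserted into a position set is justified); the completeness direction --- that every vertex of $\R{w}{\alpha_1\dots\alpha_j}$ is eventually deposited in $C_j$ --- is never argued separately. Your decomposition supplies exactly the missing half: soundness as a loop invariant strengthened to couple $\I$ with $D'$ (your clause~(b) on new non-terminal edges is needed and is only implicit in the paper), and completeness as a property of the terminal fixpoint, proved by induction on $j$ with a nested well-founded induction on derivation length borrowed from Proposition~\ref{prop:relations}. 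You also make explicit two points the paper glosses over entirely: that the interplay between the advance step (line~\ref{l:advance}) and the notify step (line~\ref{l:notify}) guarantees no propagation obligation is dropped regardless of whether an edge $(v,B,x)$ appears before or after $v$ is marked, and that the fixpoint exists because position sets grow monotonically in a finite universe. The cost of your route is length and the bookkeeping of the mutual recursion between $\I$ and $D'$; what it buys is an actual proof of the ``only if'' direction, without which Proposition~\ref{prop:correctness} does not follow.
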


\begin{proof}[Sketch]
We analyze the behaviour of the algorithm at the lines that change the set $I$ of trace items:
\begin{description}
\item[(line~\ref{l:let-I})] The set \I\ is initialized to contain the item $\Item{A}{\set{w\u} ~\alpha_1~ \emptyset ~... ~\alpha_n~ \emptyset }$, for each rule $\Rule{A}{\alpha_1~ ... ~\alpha_n} \in \rules$.
From this construction we can see that for $j=0$, we have that $w=x$, $C_0 = \set{x} = \set{w}$ and $\alpha_1~ ... ~\alpha_j = \epsilon$. 
In this case, it is evident that %easy to see that
$$w \in C_0 \iff w \in \R{w}{\epsilon}.$$

\item[(line~\ref{l:update-I})] At this line, new trace items are added into the set \I\ for each rule~$\Rule{\alpha_k}{\beta_1 ... \beta_n}$.
The creation of new items is in under the same conditions presented at line~\ref{l:let-I}.
Again $j=0$, so we have $w=x$, $C_0 = \set{x} = \set{w}$ and $\beta_1~ ... ~\beta_j = \epsilon$. 
In this case, we have
$$w \in C_0 \iff w \in \R{w}{\epsilon}.$$

\item[(line~\ref{l:advance})] A position set $C$ in \I\ is incremented with new vertices $y$ such that $(x,\alpha_k,y) \in D'$.
We can distinguish two cases:

\begin{trivlist}
\item[-] If $\alpha_k$ is a terminal symbol, we add to $C_k$ all vertices $y$ such that exists a $\alpha_k$-labeled edge from $x$ to $y$ in $D'$:
$$y \in C_k \iff y \in \R{x}{\alpha_k}.$$
This condition holds by \defref{relation}{terminal}.

\item[-] If $\alpha_k \in \nonterminals$ we need to add to $C_k$ all the vertices $y$ such that there is an edge labelled $(x, \alpha_k, y)$ in $D'$.
Notice that this edge was the result of a previous processing, meaning that the algorithm has already discovered a path from $x$ to $y$ such that its trace corresponds to the right-hand side of a production rule of $\alpha_k$.
Thus,  
$$y \in C_k \iff y \in \R{x}{\alpha_k}.$$
This condition holds by \defref{relation}{nonterminal}.
\end{trivlist}

\item[(line~\ref{l:notify})] We deal with those vertices $x$ appearing at the last position set of a trace item $\Item{A}{\set{w\m} ... \set{x\u,...}}$ built from a production rule $\Rule{A}{\gamma}$. 
Items with this configuration indicate the existence of a path from $w$ to $x$ in $D'$ such that its trace is the string $\gamma$.
Our algorithm adds a new $A$-labeled edge from $w$ to $x$ (line~\ref{l:update-D'}), thus using the production rule.
Thus, for every item $i = \Item{B}{... \set{w\m,...}\ A\ C_j ...}$ built from a production rule $\Rule{B}{\gamma_1\ A\ \gamma_2}$, we can verify that:
$$x \in C_j \iff x \in \R{w}{A}.$$
This condition holds by Definitions~\ref{def:relation}.\ref{def:relation-nonterminal} and~\ref{def:relation}.\ref{def:relation-string}.
\end{description}
\end{proof}

We start by presenting evidences that the proposed algorithm is correct.

The result graph $D'$ is only updated at line~\ref{l:let-D'}, where it just copies the input graph $D$, and at line~\ref{l:update-D'}, where it is increased with a new edge $(w,A,x)$ where $w,A$ and $x$ come from the generalized item $i = \Rule{A}{\set{w\m} \dots \set{x\m, \dots}}$.
By \defref{relation}{terminal} we can conclude that line~\ref{l:let-D'} is a valid step; however, for line~\ref{l:update-D'} it depends on whether the generalized items $i \in \I$ were constructed correctly.

\begin{proposition}
\label{prop:correctness}
Algorithm~\ref{alg:shared-prefixes} computes $D'$ such that for all $(x,A) \in Q$
$$ \forall y.\ ( y \in \R{x}{A} \iff (x,A,y) \in D') $$
\begin{proof}
This follows from Propositions~\ref{prop:relations} and~\ref{prop:I}.
\end{proof}
\end{proposition}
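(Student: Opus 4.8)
The plan is to prove the two implications of the biconditional separately, in each case using Proposition~\ref{prop:I} to translate between the position sets of the trace items in $\I$ and the reachability function $\R{x}{\alpha}$, and using Definition~\ref{def:relation}.\ref{def:relation-nonterminal} to pass from a right-hand side $\gamma$ to the non-terminal $A$ itself. The statement as written speaks only of $\R{x}{A}$ and $D'$, so the combinatorial core is Proposition~\ref{prop:I}; Proposition~\ref{prop:relations} enters only to supply a well-founded measure, as explained below.

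For soundness, i.e.\ $(x,A,y)\in D' \Rightarrow y \in \R{x}{A}$, I would first note that the input graph carries only terminal labels (line~\ref{l:let-D'} merely copies $D$), so every non-terminal-labeled edge of $D'$ must have been inserted at line~\ref{l:update-D'}. Such an insertion of $(x,A,y)$ is triggered only while processing a vertex that lies in the last position set of an item $\Item{A}{\set{x}\ \alpha_1\ C_1\ \dots\ \alpha_n\ C_n} \in \I$ built from a rule $\Rule{A}{\alpha_1 \dots \alpha_n}$, with $y \in C_n$. Proposition~\ref{prop:I} (taking $j=n$) then yields $y \in \R{x}{\alpha_1 \dots \alpha_n}$, and Definition~\ref{def:relation}.\ref{def:relation-nonterminal} gives $\R{x}{\alpha_1\dots\alpha_n} \subseteq \R{x}{A}$, hence $y \in \R{x}{A}$. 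One caveat I would flag is that this uses Proposition~\ref{prop:I} for \emph{every} item of $\I$, including those spawned at line~\ref{l:update-I}, not merely the query items; so I would first argue that Proposition~\ref{prop:I} extends verbatim to all items, which its proof sketch already does by treating line~\ref{l:update-I} on the same footing as line~\ref{l:let-I}.

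For completeness, i.e.\ $y \in \R{x}{A} \Rightarrow (x,A,y)\in D'$ for $(x,A)\in Q$, I would use Definition~\ref{def:relation}.\ref{def:relation-nonterminal} to extract a rule $\Rule{A}{\gamma}$ with $\gamma=\alpha_1\dots\alpha_n$ and $y \in \R{x}{\gamma}$. Initialization at line~\ref{l:let-I} creates the item $\Item{A}{\set{x\u}\ \alpha_1\ \emptyset\ \dots\ \alpha_n\ \emptyset}$, whose final last position set equals $\R{x}{\gamma}$ by Proposition~\ref{prop:I}, so $y$ is eventually placed there. Since the main loop runs until no position set contains an unprocessed vertex, $y$ is eventually selected in that last set, the second branch (lines~\ref{l:case-end}--\ref{l:end-case-end}) fires, and line~\ref{l:update-D'} inserts $(x,A,y)$ into $D'$. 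This half also needs a termination argument, which I would supply by observing that each item is determined by its rule and its singleton first position set (the first set is never enlarged, only later sets are, at lines~\ref{l:advance} and~\ref{l:notify}), so $\I$ is finite, and every position set is a monotonically growing subset of the finite $V$; hence the algorithm halts with the well-defined final items that Proposition~\ref{prop:I} describes.

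The main obstacle is that the two cited propositions are not genuinely independent. The value $\R{x}{A}$ for a non-terminal is defined through the grammar, whereas the algorithm detects non-terminal adjacency by reading the $A$-labeled edges \emph{already present} in $D'$ at line~\ref{l:advance} (when $\alpha_k \in \nonterminals$). Proposition~\ref{prop:I} therefore tacitly assumes that those edges already faithfully encode the reachability relation, which is precisely the content of the present proposition; citing the two results in sequence hides a circularity. To make the argument rigorous I would replace the two-step citation by a single well-founded induction, taking as measure the length of the derivation $A \derives s$, whose link to $\R{x}{A}$ is furnished by Proposition~\ref{prop:relations}; then each edge consulted at line~\ref{l:advance} witnesses a strictly shorter derivation and may be assumed correct by the induction hypothesis, closing the soundness loop. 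For the completeness side of the same induction I would additionally verify the role of the notification at line~\ref{l:notify}: when an edge $(w,A,x)$ is added, every item holding $w$ before an $A$ must receive $x$ in the next position set, and I would check that this propagation, together with the monotone growth of position sets, guarantees the final last sets capture \emph{all} vertices in $\R{x}{\gamma}$, which is the delicate step that actually earns the ``$\Leftarrow$'' of Proposition~\ref{prop:I}.
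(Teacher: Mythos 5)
Your proof takes essentially the same route as the paper, whose entire argument is the single sentence ``This follows from Propositions~\ref{prop:relations} and~\ref{prop:I}'': your soundness/completeness split, the use of Proposition~\ref{prop:I} at $j=n$ to read off the last position set, and the passage from a rule's right-hand side to $\R{x}{A}$ via Definition~\ref{def:relation}.\ref{def:relation-nonterminal} are exactly the details the paper leaves implicit. Your further observation that Proposition~\ref{prop:I}'s handling of non-terminal symbols at line~\ref{l:advance} presupposes that the $A$-labeled edges already in $D'$ are correct---and your repair by a single well-founded induction on derivation length, with Proposition~\ref{prop:relations} supplying the measure---is a genuine strengthening that closes a circularity the paper's one-line proof does not acknowledge.
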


\subsection{Time and Space Complexity}
In this section, we show the time and space complexity of our algorithm.
Our proof is based on the finite number of elements in the sets it manipulates.

\begin{proposition}[Worst-case Space Complexity] The worst-case space complexity of Algorithm~1 is $\bigO(|V|^2 \cdot |P| \cdot k)$.
\begin{proof}
The maximum size that $D'$ and \I\ may reach is:
\begin{itemize}
\item[$D'$:] The algorithm increments the graph $D'$ with non-terminal-labeled edges, so it uses at most:
\begin{eqnarray}|D'| &=& |V| \cdot |\nonterminals \cup \terminals| \cdot |V|\end{eqnarray}
what is $\bigO(|V|^2 \cdot |\nonterminals \cup \terminals|)$.

\item[\I:] The set \I\ contains generalized items, which are annotated production rules with a single vertex at the start of the right-hand side.
So we have at most:
\begin{eqnarray}
|\I| &=& |V| \cdot |P|
\end{eqnarray}
For each trace item, the number of position set sets depends on the size of the right-hand side of a production rule.
Assuming that $k$ denotes the greatest size of the right-hand side of the rules in $P$, each trace item may have $k$ position sets of size at most $|V|$ (notice that the first position set on each trace item is always a singleton).

In this context, the worst case in space complexity for $I$ is:
\[ |V| \cdot |P| \cdot k \cdot |V|.
\]
what is $\bigO(|V|^2\cdot |P| \cdot k)$.

\end{itemize}

We can now estimate the worst-case space complexity as:
\begin{eqnarray}
& \bigO(|V|^2 \cdot (|\nonterminals \cup \terminals| + |P| \cdot k)) &
\end{eqnarray}

% Considering that, for most practical applications we have to process small-sized grammars and large graphs the size of the graph dominates all the other parameters.
% In these practical cases we have that the worst-case space complexity of our algorithms is $\bigO(|V|^2)$.

\end{proof}
\end{proposition}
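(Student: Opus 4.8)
The plan is to bound separately the two data structures that the algorithm keeps in memory---the augmented graph $D'$ and the set of trace items $\I$---and then to add the two bounds. Since the statement only concerns the \emph{maximum} amount of space ever occupied, it suffices to bound the final (largest) size of each structure, which reduces the whole argument to a finite counting exercise over the elements that can possibly be stored.

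First I would bound $D'$. Every edge added at line~\ref{l:update-D'} is a triple $(w,A,x)$ with $w,x \in V$ and $A \in \nonterminals$, while the original edges of $D$ carry labels in $\terminals$; hence the entire graph $D'$ lives in $V \times (\nonterminals \cup \terminals) \times V$. This immediately gives $|D'| \le |V|^2 \cdot |\nonterminals \cup \terminals|$, a contribution of $\bigO(|V|^2 \cdot |\nonterminals \cup \terminals|)$.

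The more delicate part is bounding $\I$, and here the key observation---which I would phrase as an invariant maintained throughout the main loop---is that every trace item has a \emph{singleton} first position set. This holds at creation time, both at line~\ref{l:let-I}, where items $\Item{A}{\set{w\u}\dots}$ are built, and at line~\ref{l:update-I}, where items $\Item{\alpha_k}{\set{x\u}\dots}$ are built; and it is preserved because the loop body only ever enlarges position sets $C_k$ with $k \ge 1$ (lines~\ref{l:advance} and~\ref{l:notify}), never the first one. Consequently an item is uniquely determined by the pair consisting of its production rule and the single vertex in its first position set, so $\I$ holds at most $|V| \cdot |\rules|$ distinct items. For the size of one item, writing $k$ for the greatest length of a right-hand side in $\rules$, each item carries at most $k+1$ position sets, of which the first is a singleton and each of the remaining (at most $k$) is a subset of $V$; thus one item occupies $\bigO(k \cdot |V|)$ space. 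Multiplying, the total space for $\I$ is $\bigO(|V| \cdot |\rules| \cdot k \cdot |V|) = \bigO(|V|^2 \cdot |\rules| \cdot k)$.

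Adding the two contributions yields $\bigO(|V|^2 \cdot (|\nonterminals \cup \terminals| + |\rules| \cdot k))$, and since every grammar symbol occurs in at least one rule, the number of distinct symbols $|\nonterminals \cup \terminals|$ is dominated by the total count $|\rules|\cdot k$ of symbol occurrences; this collapses the bound to the claimed $\bigO(|V|^2 \cdot |\rules| \cdot k)$. The main obstacle I anticipate is the no-duplication argument for $\I$: I must be certain the algorithm never stores two items sharing the same rule and the same starting vertex, which rests precisely on the singleton-first-position-set invariant together with the guard at line~\ref{l:notin-Res}, ensuring that line~\ref{l:update-I} fires only when no item $\Item{\alpha_k}{\set{x}\dots}$ is already present. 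Verifying that this guard truly prevents re-creation of an already-seen $(\text{rule},\text{vertex})$ pair is the one place where a careful reading of the control flow, rather than routine counting, is required.
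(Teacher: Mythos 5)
Your proof is correct and follows essentially the same route as the paper's: bound $D'$ by $|V|^2\cdot|\nonterminals\cup\terminals|$ and $\I$ by $|V|\cdot|\rules|$ items of size $\bigO(k\cdot|V|)$ each, then sum. You go slightly further than the paper in two useful places---explicitly verifying the singleton-first-position-set invariant that underlies the $|V|\cdot|\rules|$ count of items, and justifying why $|\nonterminals\cup\terminals|$ is absorbed into $|\rules|\cdot k$ so that the sum collapses to the stated $\bigO(|V|^2\cdot|\rules|\cdot k)$ (the paper stops at the un-collapsed sum)---but these are refinements of the same argument, not a different one.
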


\begin{proposition}[Worst-case Runtime Complexity] 
\label{prop:time}
The worst-case runtime complexity of Algorithm~1 is $\bigO(|V|^3 \cdot |P|^2 \cdot k^2)$.
\begin{proof}[Proof Sketch]
The main loop iterates until there are no more unmarked vertices $x\u$.
The maximum number of unmarked vertices is given by $|I|\cdot k \cdot |V|$, where $k$ is the maximum number of possible position sets for rules of the grammar (the greatest size of a right-hand side of the rules in $P$, plus one). 
So, as $|I| = |V| \cdot |P|$, we have at most $|V|^2 \cdot |P| \cdot k$ possible vertices $x\u$.

For each iteration, the form of the trace item $i$ guides the operation to be performed.
The tests at lines~\ref{l:case-mid} and~\ref{l:case-end} have constant cost.

There are two cases to be considered inside the \textbf{switch} command:
\begin{itemize}
\item The evaluation of the condition at line~\ref{l:in-Res} requires searching over the set of trace items $I$. 
The cost of this operation is constant (supposing that we use a matrix representation).

Line~\ref{l:advance} is the case where the algorithm advances one step on a path by looking for edges $(x, \alpha, y) \in D'$.
As there are at most $|V|$ possible destination vertexes, the algorithm performs at most $|V|$ operations in this case.

At line~\ref{l:update-I}, the algorithm adds new trace items to \I\ in order to start a new derivation.
This line ensures that the algorithm only creates at most one trace item for each production rule in $P$ for a fixed vertex $x$. 
So, in this case, the algorithm performs at most $|P|$ constant time operations.

In this way, the overall cost of the case spanning from line~\ref{l:case-mid} to~\ref{l:update-I} is bounded by $\max(|V|,|P|)$. 

\item The second case of the \textit{switch} command adds non-terminal labelled edges to the graph.
The creation of such edges is performed at line~\ref{l:update-D'}, in constant time.

The appearance of a new edge triggers the update of position sets by the iteration at line~\ref{l:foreach}.
We have at most $|V|\cdot|P|\cdot k$ position sets.
Assuming, again, a matrix representation, locating each set $C$ in a trace item, requires constant time.
Thus, line~\ref{l:notify} will be executed $|V|\cdot|P|\cdot k$ times in the worst case. 

In this way, the overall cost of the case spanning from line~\ref{l:case-end} to~\ref{l:notify} is bounded by $|V|\cdot|P|\cdot k$.
\end{itemize}

This shows that the worst-case time complexity of our algorithm is $\bigO(|V|^3 \cdot |P|^2 \cdot k^2)$.

% \[\begin{array}{ll}
% \bigO (|V|^2 * |P| * k * |V|^2 * \log(V) * |P| * k) = \\
% \qquad\qquad\qquad\qquad\qquad\qquad
% \bigO(|V|^4 * \log(V) * |P|^2 * k^2).
% \end{array}
% \]
\end{proof}
\end{proposition}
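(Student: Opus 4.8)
The plan is to bound the total running time as the product of two quantities: the maximum number of iterations of the main loop (line~\ref{l:main-loop}), and the maximum cost charged to a single iteration.

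First I would bound the number of iterations. The key observation is that every iteration consumes exactly one unmarked vertex occurrence: the loop guard selects some $x\u$ inside a position set of some item, and the body finishes by marking it (line~\ref{l:mark}), turning $x\u$ into $x\m$. Because the $\munion$ operator never re-inserts a vertex as unprocessed once it already carries the $\m$ mark, no such occurrence can ever be recreated. Hence the number of iterations is at most the total number of (item, position set, vertex) triples that can ever arise. Reusing the counting from the space-complexity analysis, there are at most $|\I| = |V|\cdot|\rules|$ trace items, each with at most $k$ position sets, each holding at most $|V|$ vertices; this yields at most $|V|^2\cdot|\rules|\cdot k$ iterations.

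Next I would bound the per-iteration cost by inspecting the two branches of the \textbf{switch}. In the first branch (lines~\ref{l:case-mid}--\ref{l:end-case-mid}), the membership test at line~\ref{l:in-Res} is constant time under a matrix representation of $\I$; the advance at line~\ref{l:advance} scans the at most $|V|$ many $\alpha_k$-successors of $x$ in $D'$; and the item creation at line~\ref{l:update-I} adds at most one item per production, i.e. at most $|\rules|$ items. This branch therefore costs $\bigO(\max(|V|,|\rules|))$. In the second branch (lines~\ref{l:case-end}--\ref{l:end-case-end}), inserting the edge at line~\ref{l:update-D'} is constant time, while the loop at line~\ref{l:foreach} ranges over all position sets $C$ occurring immediately after an $A$-labeled symbol, of which there are at most $|V|\cdot|\rules|\cdot k$; assuming constant-time location of each such $C$, this branch costs $\bigO(|V|\cdot|\rules|\cdot k)$, which dominates the first branch. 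Multiplying the iteration bound by this dominant per-iteration cost gives $(|V|^2\cdot|\rules|\cdot k)\cdot(|V|\cdot|\rules|\cdot k) = \bigO(|V|^3\cdot|\rules|^2\cdot k^2)$, as claimed.

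The main obstacle, and the step deserving the most care, is the monotonicity argument guaranteeing that each unmarked-vertex occurrence is processed at most once: this rests entirely on the semantics of $\munion$ (which suppresses re-insertion of already-processed vertices) together with the permanence of the $\m$ mark set at line~\ref{l:mark}. Without this invariant the iteration count could blow up, so I would state it explicitly before charging the loop. Everything else is a routine charging of constant- or linear-cost operations, contingent on the assumed matrix representation that makes the lookups at line~\ref{l:in-Res} and the position-set accesses at line~\ref{l:notify} constant time.
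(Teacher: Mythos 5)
Your proof is correct and takes essentially the same route as the paper's: the same decomposition into an iteration bound of $|V|^2\cdot|P|\cdot k$ and a dominant per-iteration cost of $|V|\cdot|P|\cdot k$ from the second \textbf{switch} branch, multiplied to give $\bigO(|V|^3\cdot|P|^2\cdot k^2)$. The only difference is that you make explicit the monotonicity invariant (marks are permanent and $\munion$ never re-inserts a processed vertex) that justifies charging each unmarked-vertex occurrence exactly once, which the paper leaves implicit.
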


\section{Related Work}
\label{sec:related-work}

Graph databases have become popular in the last few years.
Specifying queries over such databases normally include \textit{property paths}, which define paths on the data graph by means of regular expressions~\cite{mendelzon1989regular-simple-paths, w3c2012sparql-query-lang}.
In~\cite{abiteboul1995foundations,Hellings14,zhang2016,grigorev2016ll}, it is noted that there exist useful queries that cannot be expressed by regular expressions, since they require some kind of bracket matching.
\textit{Same Generation Queries}~\cite{abiteboul1995foundations} are an example of queries that cannot be expressed by regular expressions, requiring the identification of context-free paths.

Answering context-free path queries is NP-Complete~\cite{MendelzonW95}.
However, specifying the starting node of the path makes the cost of processing those queries manageable.

In~\cite{Hellings14}, the author proposes an algorithm to evaluate Context-Free Path Queries based on Earley's and CYK parsing techniques~\cite{grune2007parsing}. 
This algorithm receives a grammar (in Chomsky Normal Form) and a data graph.
The algorithm is based on the idea of adding a non-terminal-labelled edge to link nodes that are connected by a path generated by the grammar.
Regardless of the query, the algorithm in~\cite{Hellings14} processes the whole graph.
For any vertices $x$ and $y$ and non-terminal symbol $S$, an $S$-labelled edge linking $x$ to $y$ is created if there exist an $S$-derivable path in the graph linking $x$ to $y$.
After that, atomic queries can be executed in constant time.
The algorithm is $\bigO(|N||E|+(|N||V|)^3)$, where $N$ is the set of non-terminal symbols of the grammar, $V$ is the set of nodes of the graph and $E$ is the set of edges.

In~\cite{zhang2016}, the query language cfSPARQL is proposed.
The language includes queries defined by context-free grammars, as well as by nested regular expressions~\cite{nsparql}.
The evaluation mechanism of cfSPARQL is an adaptation of the algorithm in~\cite{Hellings14} and presents the same time complexity.

An LL-based approach to recognize context-free paths in RDF graphs is proposed in~\cite{grigorev2016ll}. 
The proposal uses the GLL~\cite{gll} parsing technique to define an algorithm for querying data graphs with time complexity of $\bigO( {|V|^3  max_{v \in V}(deg^+(v))})$, where $V$ is the set of vertices and $deg^+(v)$ is the outdegree of vertex $v$. 
Notice that for complete graphs this runtime complexity is $\bigO(|V|^4)$.

The Valiant's parsing algorithm~\cite{Valiant75} is the base for the query algorithm presented in~\cite{azimov-grigorev2017matrix}.
The algorithm uses a matrix representation of the graph where each cell contains the edge between two vertices, represented by line and column.
The proposal uses an efficient, GPU-based calculation of the transitive closure of that matrix to answer queries.
Similarly to~\cite{Hellings14}, the algorithm in~\cite{azimov-grigorev2017matrix} calculates all possible non-terminal labelled edges between nodes of the graph.
The time complexity of this algorithm is $\bigO(|V|^4 \cdot |N|^3)$, where $V$ is the set of vertices of the graph and $N$ is the set of non-terminal symbols of the query's grammar.

In~\cite{fred}, the authors present a Context-Free Path Query processing algorithm based on the well-known bottom-up LR parsing technique~\cite{aho2007compilers}.
The algorithm uses the LALR parsing table for the grammar.
The proposal extends Tomita's algorithm and GSS data structure~\cite{tomita} to simultaneously discover context-free paths on a data graph.
The proposed algorithm does not need to pre-process the whole graph in order to answer the query.
The time complexity of this algorithm is given by $\bigO(|V|^{4+k} \cdot |I|^{1+k} \cdot |\terminals| \cdot |\nonterminals|)$, where $k$ is the maximum size of the right-hand side of the production rules in the grammar and $I$ is the number of lines of the LALR(1) parsing table.

In~\cite{MEDEIROS201975}, the authors  propose a query processing algorithm based on the LL parsing technique~\cite{aho2007compilers}.
For queries of the form $(x, S)$, where $x$ is a vertex of the graph and $S$ is a non-terminal symbol, the algorithm proceeds in a top-down manner, trying to discover $S$-generated paths from $x$.
The worst case runtime complexity of their algorithm is $\bigO(|V|^3 \cdot |P|)$, where $P$ is the set of production rules of the grammar.

The authors in~\cite{jochemkuijpers} evaluate the Context-Free Path Query evaluation methods in~\cite{azimov-grigorev2017matrix,fred,Hellings2015pathresults}.
The authors perform experiments with several data sets, including real and synthetic ones.
The paper focus on scalability of the three approaches and concludes that these methods are not yet adequate for big data processing.
We expect to contribute towards that goal.

\bigskip

\section{Experiments}
In this section we present some performance experiments to investigate the viability of our algorithm.
We implemented a prototype using the Go programming language\footnote{The source code and data for out prototype is available at Github; the link to it is not shown due to the double-blind revision process of the conference.}.
The experiments were performed on a Debian 8.11, 64GB RAM,  Intel Xeon E312xx (Sandy Bridge) @ 2.195GHz, 64 bits.
The results presented here are the average time and memory of 10 runs.

We compared our algorithm to the one in~\cite{comlan}.
Their algorithm is implemented in Python and was run using the same computer as the algorithm we propose here. 
For both algorithms, we performed the same experiments as in~\cite{grigorev2016ll, azimov-grigorev2017matrix, Hellings2015pathresults, comlan, zhang2016, jochemkuijpers}.
The databases used in the experiments include both synthetic graphs and publicly available ontologies.
The synthetic graphs and the grammars used to query them were designed in order to explore specific characteristics of the evaluation mechanisms, such as their memory and runtime performance in their worst-case or random scenarios; the influence of grammar ambiguity or density/sparsity as well as to observe the scalability properties of our approach.
The  dataset of ontologies consists of a number of popular ontologies publicly available and it is the same used in previous works~\cite{grigorev2016ll, azimov-grigorev2017matrix, zhang2016}.

The non-random synthetic graphs used in the experiments are described as follows.
A complete graph corresponds to the product $V \times \terminals \times V$, and it represents the worst-case scenario for the database, where each vertex is linked to all the vertices of the graph, including itself.
We also considered two kinds of linear graphs, \textit{i.e.}, graphs that have the form of a single straight path:
the first kind, referred to as $ab$-list graphs, is formed by graphs whose labels form a path $a^n b^n$; the second kind, called $\sigma$-string graphs, is formed by straight line graphs where all the edges are labeled with $\sigma$.
Cycle graphs have all edges labeled with $\sigma$.

\bigskip

Let us present some experiments to test the behaviour of our algorithm in specific cases.

\paragraph{Dealing with Ambiguous Grammars}

The data presented in Figure~\ref{fig:exp-string} %and~\ref{fig:exp-complete}
corresponds to the execution over $ab$-list. % and complete graphs, respectively.
We used Grammars~\ref{gram:ab_ambiguous} and~\ref{gram:ab_unambiguous}, which recognize the language of balanced $a$'s and $b$'s.
These grammars are defined as follows:

\begin{grammar}(Ambiguous) Generates strings containing balanced pairs of $a$'s and $b$'s~\cite{grigorev2016ll, azimov-grigorev2017matrix, zhang2016, comlan}:
\label{gram:ab_ambiguous}
$$\Rule{S}{S~S ~|~ a ~S~ b ~|~ \epsilon}$$
\end{grammar}

\begin{grammar} Unambiguous grammar generating the same language of~\autoref{gram:ab_ambiguous}~\cite{grigorev2016ll,azimov-grigorev2017matrix,zhang2016,comlan}:
\label{gram:ab_unambiguous}
$$\Rule{S}{a ~S ~b ~S ~|~ \epsilon}$$
\end{grammar}

The query was defined as $Q=\set{(x, S)\ |\ x\in V}$ \textit{i.e.}, we look for all vertices that are linked by an $S$-derived path from each vertex of the $ab$-list graph. 

%Both for $ab$-list and complete graphs,
We observe that our algorithm presents a very efficient runtime behaviour as the graph grows in size, when compared to~\cite{comlan}.
We also observe that the behaviour of our algorithm is not heavily affected by the grammar's ambiguity.

In terms of memory consumption, both algorithms behave in a similar way, with a small advantage to our algorithm.

%This shows that, depending on the circumstances, grammar ambiguity is not necessarily prejudicial for the algorithm's performance.

\paragraph{Dense and Sparse Grammars.}
Figures~\ref{fig:exp-hlg-cycle} and~\ref{fig:exp-hlg-path} % and~\ref{fig:exp-hlg-complete}
compare the execution of our prototype and the LL~\cite{comlan} algorithm over cycle and path
%and complete
graphs,
respectively, using Grammars~\ref{gram:hlg_dense} and~\ref{gram:hlg_sparse} and for the same query set as before.

\begin{grammar} Dense grammar recognizing the language $\sigma^+$~\cite{Hellings2015pathresults}:
\label{gram:hlg_dense}
$$\Rule{A}{A ~A} \qquad \Rule{A}{\sigma}$$
\end{grammar}

The notion of a \textit{dense} grammar refers to the fact of the grammar generating strings without having empty transitions, in contrast to a \textit{sparse} grammar.

\begin{grammar} Sparse grammar recognizing the language $\sigma^*$~\cite{Hellings2015pathresults}:
\label{gram:hlg_sparse}
$$\Rule{B}{B ~A ~|~ A ~B ~|~ \epsilon} \qquad \Rule{A}{\sigma}$$
\end{grammar}

As in the previous case, we observe that the behaviour of our algorithm is better in terms of time and memory consumption, when compared to the algorithm in~\cite{comlan}.

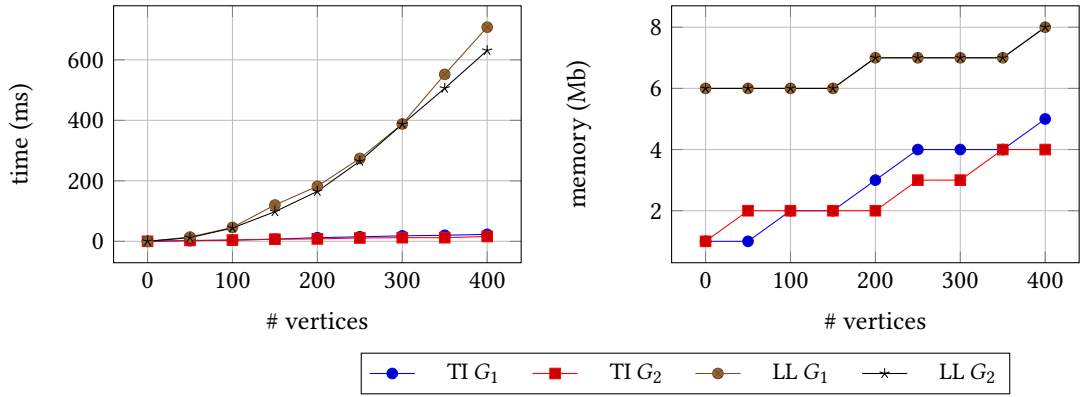
\begin{figure*}
    \centering
    \begin{tikzpicture}
    \begin{axis}[
        name=ax2,
        height=5cm,
        width=7cm,
        grid=major,
        xlabel=\# vertices,
        ylabel=time (ms),
        legend style={font=\small,
        at={(2.2,-0.35)},
        anchor=north east,legend columns=4, column sep=10pt}
    ]

\addplot coordinates { (0, 0) (50, 2) (100, 4) (150, 7) (200, 12) (250, 15) (300, 18) (350, 20) (400, 23)}; % (450, 47) (500, 45)};
\addlegendentry{TI $G_{\ref{gram:ab_ambiguous}}$};
\addplot coordinates { (0, 0) (50, 3) (100, 4) (150, 7) (200, 8) (250, 11) (300, 12) (350, 12) (400, 16)}; % (450, 13) (500, 13)};
\addlegendentry{TI $G_{\ref{gram:ab_unambiguous}}$};

\addplot coordinates { (0, 0) (50, 14) (100, 46) (150, 120) (200, 182) (250, 274) (300, 388) (350, 552) (400, 708)};
\addlegendentry{LL~$G_{\ref{gram:ab_ambiguous}}$};
\addplot coordinates { (0, 0) (50, 12) (100, 44) (150, 98) (200, 165) (250, 265) (300, 387) (350, 506) (400, 631)};
\addlegendentry{LL~$G_{\ref{gram:ab_unambiguous}}$};

\end{axis}

\begin{axis}[
        at={(ax2.south east)},
        xshift=2cm,
        height=5cm,
        width=7cm,
        grid=major,
        legend pos=north west,
        xlabel=\# vertices,
        ylabel=memory (Mb)
    ]

\addplot coordinates { (0, 1) (50, 1) (100, 2) (150, 2) (200, 3) (250, 4) (300, 4) (350, 4) (400, 5)}; % (450, 7) (500, 7)};
%\addlegendentry{This work $G_{\ref{gram:ab_ambiguous}}$};
\addplot coordinates { (0, 1) (50, 2) (100, 2) (150, 2) (200, 2) (250, 3) (300, 3) (350, 4) (400, 4)}; % (450, 5) (500, 4)};
%\addlegendentry{This work $G_{\ref{gram:ab_unambiguous}}$};
\addplot coordinates { (0, 6) (50, 6) (100, 6) (150, 6) (200, 7) (250, 7) (300, 7) (350, 7) (400, 8)};
%\addlegendentry{ LL $G_{\ref{gram:ab_ambiguous}}$};
\addplot coordinates { (0, 6) (50, 6) (100, 6) (150, 6) (200, 7) (250, 7) (300, 7) (350, 7) (400, 8)};
%\addlegendentry{ LL $G_{\ref{gram:ab_unambiguous}}$};

\end{axis}
\end{tikzpicture}
\caption{$ab$-list graphs, Grammars~$G_{\ref{gram:ab_ambiguous}}$ and~$G_{\ref{gram:ab_unambiguous}}$.}
    \label{fig:exp-string}
\end{figure*}

%\begin{figure*}
%    \centering
%\input{charts/ab-complete}
%\caption{Complete graphs, Grammars~$G_{\ref{gram:ab_ambiguous}}$ and~$G_{\ref{gram:ab_unambiguous}}$.}
%    \label{fig:exp-complete}
%\end{figure*}

% \begin{figure*}[htpb]
%     \centering
%     \begin{subfigure}[t]{0.45\textwidth}
%         \centering
%         \input{charts/string}
%         \caption{String Graphs with Grammars G\textsubscript{\ref{gram:ab_ambiguous}} and G\textsubscript{\ref{gram:ab_unambiguous}}.\ciro{atualizado 21oct 16:11}}
%         \label{fig:exp-string}
%     \end{subfigure}
%     ~
%     \begin{subfigure}[t]{0.45\textwidth}
%         \centering
%         \input{charts/complete}
%         \caption{Complete Graphs with Grammars G\textsubscript{\ref{gram:ab_ambiguous}} and G\textsubscript{\ref{gram:ab_unambiguous}}.}
% \label{fig:exp-complete}
%     \end{subfigure}
%     \caption{Runtime (ms) and Memory (Mb) Performance}
%     \label{fig:experiments}
% \end{figure*}

\begin{figure*}
    \centering
    \begin{tikzpicture}
    \begin{axis}[
        name=ax2,
        height=5cm,
        width=7cm,
        grid=major,
        xlabel=\# vertices,
        ylabel=time (ms),
        ymax=10000,
        legend style={font=\small,
        at={(2.2,-0.35)},
        anchor=north east,legend columns=4, column sep=10pt}
    ]

\addplot coordinates { (0, 0) (50, 37) (100, 262) (150, 811) (200, 1474) (250, 2656) (300, 3991) (350, 5622) (400, 7898)}; % (450, 12108) (500, 14764)};
\addlegendentry{TI G\textsubscript{\ref{gram:hlg_dense}}};

\addplot coordinates { (0, 0) (50, 16) (100, 103) (150, 292) (200, 489) (250, 862) (300, 1062) (350, 1378) (400, 1700)}; % (450, 2352) (500, 2946)};
\addlegendentry{TI G\textsubscript{\ref{gram:hlg_sparse}}};

\addplot coordinates { (0, 0) (50, 19676)};
\addlegendentry{LL~$G_{\ref{gram:hlg_dense}}$ };

\addplot coordinates { (0, 0) (50, 1065)};
\addlegendentry{LL~$G_{\ref{gram:hlg_sparse}}$ };

\end{axis}

\begin{axis}[
        at={(ax2.south east)},
        xshift=2cm,
        height=5cm,
        width=7cm,
        grid=major,
        legend pos=north west,
        xlabel=\# vertices,
        ylabel=memory (Mb)
    ]

\addplot coordinates { (0, 1) (50, 3) (100, 8) (150, 15) (200, 19) (250, 28) (300, 31) (350, 36) (400, 42)}; % (450, 84) (500, 72)};
%\addlegendentry{TI G\textsubscript{\ref{gram:hlg_dense}}};

\addplot coordinates { (0, 1) (50, 3) (100, 8) (150, 14) (200, 19) (250, 29) (300, 39) (350, 41) (400, 47)}; % (450, 85) (500, 83)};
%\addlegendentry{TI G\textsubscript{\ref{gram:hlg_sparse}}};

\addplot coordinates { (0, 6) (50, 6)};
%\addlegendentry{LL~$G_{\ref{gram:hlg_dense}}$ };

\addplot coordinates { (0, 6) (50, 6)};
%\addlegendentry{LL~$G_{\ref{gram:hlg_sparse}}$ };

\end{axis}
\end{tikzpicture}
    \caption{Cycle graphs, Grammars G\textsubscript{\ref{gram:hlg_dense}} and G\textsubscript{\ref{gram:hlg_sparse}}.}
    \label{fig:exp-hlg-cycle}
\end{figure*}
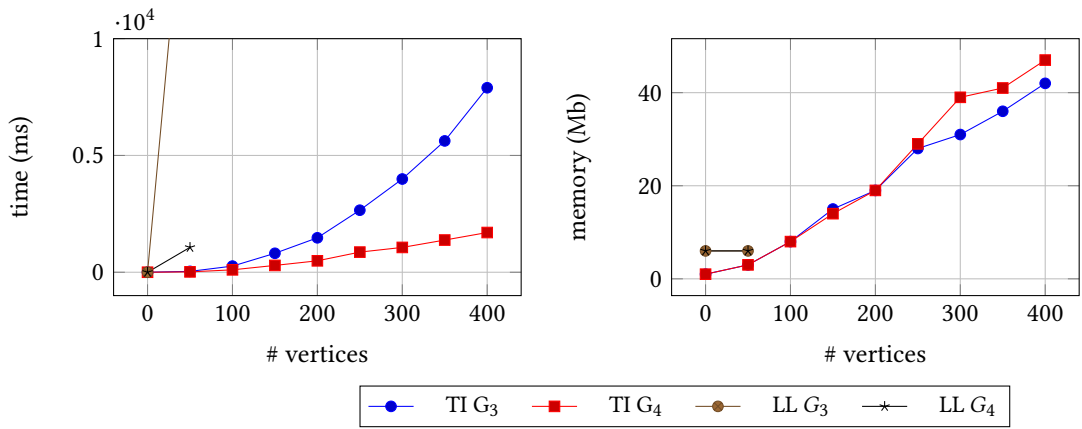

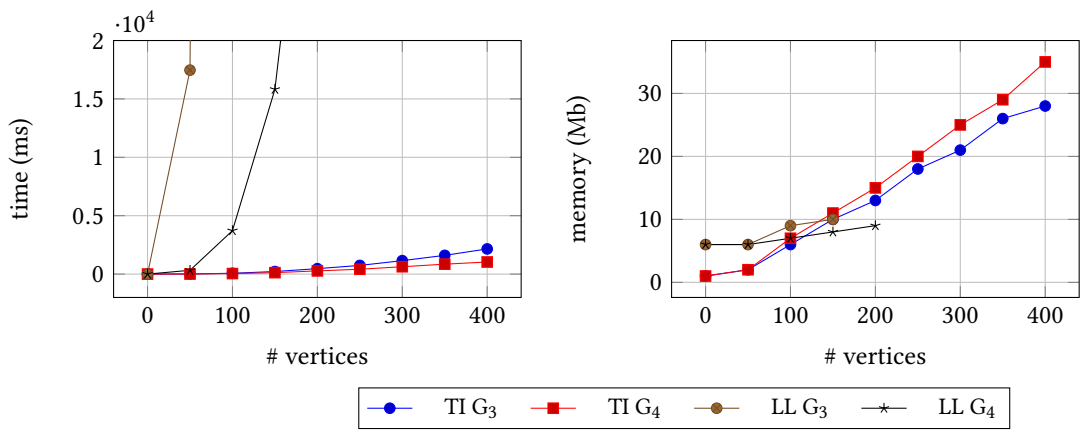
\begin{figure*}[htpb]
        \centering
        \begin{tikzpicture}
    \begin{axis}[
        name=ax2,
        height=5cm,
        width=7cm,
        grid=major,
        xlabel=\# vertices,
        ylabel=time (ms),
        ymax=20000,
        legend style={font=\small,
        at={(2.2,-0.35)},
        anchor=north east,legend columns=4, column sep=10pt}
    ]

\addplot coordinates { (0, 0) (50, 11) (100, 71) (150, 220) (200, 466) (250, 743) (300, 1152) (350, 1605) (400, 2155)}; % (450, 2799) (500, 3685)};
\addlegendentry{TI G\textsubscript{\ref{gram:hlg_dense}}};
\addplot coordinates { (0, 0) (50, 7) (100, 45) (150, 123) (200, 276) (250, 416) (300, 631) (350, 850) (400, 1036)}; % (450, 1193) (500, 1536)};
\addlegendentry{TI G\textsubscript{\ref{gram:hlg_sparse}}};

\addplot coordinates { (0, 0) (50, 17465) (100, 517852)}; % (150, 3893564)};
\addlegendentry{LL G\textsubscript{\ref{gram:hlg_dense}}};
\addplot coordinates { (0, 0) (50, 325) (100, 3720) (150, 15822) (200, 46913)};
\addlegendentry{LL G\textsubscript{\ref{gram:hlg_sparse}}};

\end{axis}

\begin{axis}[
        at={(ax2.south east)},
        xshift=2cm,
        height=5cm,
        width=7cm,
        grid=major,
        legend pos=north west,
        xlabel=\# vertices,
        ylabel=memory (Mb)
    ]

\addplot coordinates { (0, 1) (50, 2) (100, 6) (150, 10) (200, 13) (250, 18) (300, 21) (350, 26) (400, 28)}; % (450, 36) (500, 44)};
%\addlegendentry{TI G\textsubscript{\ref{gram:hlg_dense}}};
\addplot coordinates { (0, 1) (50, 2) (100, 7) (150, 11) (200, 15) (250, 20) (300, 25) (350, 29) (400, 35)}; % (450, 43) (500, 50)};
%\addlegendentry{TI-based G\textsubscript{\ref{gram:hlg_sparse}}};

\addplot coordinates { (0, 6) (50, 6) (100, 9) (150, 10)};
%\addlegendentry{LL G\textsubscript{\ref{gram:hlg_dense}}};
\addplot coordinates { (0, 6) (50, 6) (100, 7) (150, 8) (200, 9)};
%\addlegendentry{LL G\textsubscript{\ref{gram:hlg_sparse}}};

\end{axis}
\end{tikzpicture}
    \caption{$\sigma$-string graphs, Grammars G\textsubscript{\ref{gram:hlg_dense}} and G\textsubscript{\ref{gram:hlg_sparse}}.}
    \label{fig:exp-hlg-path}
\end{figure*}

%\begin{figure*}[htpb]
%    \centering
%        %\input{charts/hellings-complete.tex}
%    \caption{Complete Graphs, Grammars G\textsubscript{\ref{gram:hlg_dense}} and G\textsubscript{\ref{gram:hlg_sparse}}.}
%    \label{fig:exp-hlg-complete}
%\end{figure*}

For all graphs used in this experiment, our prototype presented a time performance that seems to be better than the one given by Proposition~\ref{prop:time}.

Notice that the form of the grammar's production rules have an important influence over the time performance of the algorithms. 
For $\sigma$-string and cycle graphs, sparse grammars seem to have an advantage over dense grammars.
%For complete graphs, the use of dense grammars seems to be preferable.

% It is important to notice that this is due not only to the shape of the graph, but also to the grammar.
% For cycle graphs (Figure~\ref{fig:exp-hellings-cycle}), there exists a valid path between every pair of vertices in the graph, what leads to a quadratic number of results.
% For path graphs (Figure~\ref{fig:exp-hellings-path}), the number of results grows quadratically with the size of the graph, and so does the performance curve.
% This is due to the fact that the non-terminal-labeled edges computed by the algorithm are used as input to itself.
% For complete graphs Figure~\ref{fig:exp-hellings-complete}, we have a similar case to cycle graphs, where there exists a path between every pair of vertices.
% \ciro{improvements over state-of-art?}

Regarding memory consumption, we observe the same situation as for the previous case, with our algorithm performing slightly better than the one in~\cite{comlan}.

\paragraph{Experiment with ontologies.}
For the next experiment we used a set of popular ontologies publicly available on the internet.
This dataset and the grammars described below are the same used in previous works~\cite{grigorev2016ll, azimov-grigorev2017matrix, zhang2016, comlan}.
The ``geospecies'' database and Grammar~\ref{gram:bt} were used in~\cite{jochemkuijpers}.

Grammar~\ref{gram:sc_t} retrieves concepts in the same level of the RDFS' $\mathit{subClassOf}/\mathit{type}$ hierarchy.
The experiment consists on performing a ``same generation query''~\cite{abiteboul1995foundations}.
For each vertex of the graph, the query looks for all vertices that are at the same level in the graph of the subclass/type hierarchy.

\begin{grammar} Retrieves concepts in the same level of hierarchy~\cite{grigorev2016ll, azimov-grigorev2017matrix, zhang2016, comlan}:
\label{gram:sc_t}
$$\Rule{S}{\textit{subClassOf}\ ~S~ \textit{subClassOf}^{-1}} \qquad \Rule{S}{\textit{type} ~S~ \textit{type}^{-1}}$$
$$\Rule{S}{\textit{subClassOf}~ \textit{subClassOf}^{-1}}
 \qquad\Rule{S}{\textit{type} ~\textit{type}^{-1}}$$
\end{grammar}

Grammar~\ref{gram:sc} retrieves concepts in adjacent levels of the RDFS' $\mathit{subClassOf}$ hierarchy.
\begin{grammar} Retrieves concepts on adjacent levels of the hierarchy of classes in RDF~\cite{grigorev2016ll, azimov-grigorev2017matrix, zhang2016, comlan}:
\label{gram:sc}
$$\Rule{S}{B ~\textit{subClassOf}^{-1}} \qquad \Rule{B}{\textit{subClassOf}\ ~ B~ \textit{subClassOf}^{-1} ~|~ \epsilon}$$
\end{grammar}

Grammar~\ref{gram:bt} retrieves concepts in the same level of the $\mathit{broaderTransitive}$ hierarchy.
These edges are directed from child to parent, relating  categories of species, families, orders, etc.
This is a real example of application, where a Context-Free Path Query is used to identify the pairs of vertices that are in the same category inside the biological taxonomy.

\begin{grammar} Retrieves concepts on adjacent levels of hierarchy~\cite{jochemkuijpers}:
\label{gram:bt}
$$\Rule{S}{\textit{broaderTransitive} ~S~ \textit{broaderTransitive}^{-1}}$$
$$\Rule{S}{\textit{broaderTransitive} ~ \textit{broaderTransitive}^{-1}}$$
\end{grammar}

The results of running our algorithm (as well as LL~\cite{comlan}) are shown in Table~\ref{tab:ontologies}.
The query used in this case was the same as in the previous cases: we look for paths departing from each vertex of the graph.
The first three columns of the table show the used grammar and ontology, the size of the graph and the number of results obtained by the query.

In the data presented in Table~\ref{tab:ontologies}, we can observe that both algorithms behave in the same way as observed for the synthetic examples given previously.
In general, our algorithm performs better that the one in~\cite{comlan}, with a great difference in time, in favor to our algorithm.
The last line in Table~\ref{tab:ontologies} does not contain data for the LL algorithm, since our computational resources were not sufficient for the normal execution of that algorithm.

%\mm{Eu vou parar aqui e vou passar para as conclusoes. @Ciro: Caso necessario, a gente pode deixar os experimentos que faltam para a proxima versao... Eu nao estou confiante em que este paper vai ser aceito... mas precisamos submeter.}

% Headers of Tables~\ref{tab:ontologies},~\ref{tab:kuijpers-anbmcmdn},~\ref{tab:kuijpers-sparse} and~\ref{tab:kuijpers-ambiguous}:
% \begin{itemize}
%     \item Database and Grammar are self-explanatory.
%     \item Results is the number of edges in $D'$ that are labeled with the grammar's start symbol, i. e., $|(\_,S,\_)|$.
%     \item Time is the time taken by the algorithm to prepare the data structures, process the query and return the results.
%     \item Memory was obtained with Go's \texttt{runtime} package.
%     It's the sum of stack, heap and system in use.
% \end{itemize}

\begin{table*}[]
\centering\small
\begin{tabular}{@{}lrr|rr|rr@{}}
\toprule
&&& \multicolumn{2}{c|}{\textbf{This work}} & \multicolumn{2}{c}{\textbf{LL}~\cite{comlan}} \\
\textbf{Grammar \& Graph} & \textbf{$|V|$} & \textbf{Results} & \textbf{Time} & \textbf{Memory} & \textbf{Time} & \textbf{Memory}\\
\midrule

$G_{\ref{gram:sc_t}}$, skos & 43 & 810 & 4 ms & 2.5 Mb & 115 ms & 6.7 Mb \\
$G_{\ref{gram:sc_t}}$, generations & 82 & 2164 & 8 ms & 2.9 Mb & 411 ms & 7.3 Mb \\
$G_{\ref{gram:sc_t}}$, travel & 92 & 2499 & 10 ms & 3.8 Mb & 1139 ms & 7.4 Mb \\
$G_{\ref{gram:sc_t}}$, univ\_bench & 90 & 2540 & 9 ms & 3.9 Mb & 1226 ms & 7.4 Mb \\
$G_{\ref{gram:sc_t}}$, foaf & 93 & 4118 & 15 ms & 4.6 Mb & 1915 ms & 7.4 Mb \\
$G_{\ref{gram:sc_t}}$, people\_pets & 163 & 9472 & 48 ms & 7.0 Mb & 7614 ms & 9.7 Mb \\
$G_{\ref{gram:sc_t}}$, funding & 272 & 17634 & 151 ms & 12.1 Mb & 32059 ms & 11.6 Mb \\
$G_{\ref{gram:sc_t}}$, atom\_primitive & 142 & 15454 & 208 ms & 14.6 Mb & 48048 ms & 11.0 Mb \\
$G_{\ref{gram:sc_t}}$, biomedical & 134 & 15156 & 165 ms & 12.7 Mb & 43248 ms & 11.4 Mb \\
$G_{\ref{gram:sc_t}}$, pizza & 359 & 56195 & 407 ms & 18.5 Mb & 371402 ms & 19.2 Mb \\
$G_{\ref{gram:sc_t}}$, wine & 468 & 66572 & 425 ms & 20.9 Mb & 389951 ms & 21.5 Mb \\
\midrule

$G_{\ref{gram:sc}}$, skos & 2 & 1 & 0 ms & 1.5 Mb & 0 ms & 6.6 Mb \\
$G_{\ref{gram:sc}}$, generations & 0 & 0 & 0 ms & 1.6 Mb & 0 ms & 6.6 Mb \\
$G_{\ref{gram:sc}}$, travel & 32 & 63 & 3 ms & 2.0 Mb & 9 ms & 6.6 Mb \\
$G_{\ref{gram:sc}}$, univ\_bench & 42 & 81 & 4 ms & 2.2 Mb & 11 ms & 6.7 Mb \\
$G_{\ref{gram:sc}}$, foaf & 13 & 10 & 0 ms & 2.4 Mb & 0 ms & 6.6 Mb \\
$G_{\ref{gram:sc}}$, people\_pets & 44 & 37 & 4 ms & 3.0 Mb & 5 ms & 6.6 Mb\\
$G_{\ref{gram:sc}}$, funding & 93 & 1158 & 12 ms & 3.5 Mb & 932 ms & 7.4 Mb\\
$G_{\ref{gram:sc}}$, atom\_primitive & 124 & 122 & 86 ms & 11.5 Mb & 20 ms & 6.7 Mb\\
$G_{\ref{gram:sc}}$, biomedical & 123 & 2871 & 34 ms & 6.3 Mb & 3211 ms & 8.0 Mb\\
$G_{\ref{gram:sc}}$, pizza & 261 & 1262 & 34 ms & 6.9 Mb & 2019 ms & 7.9 Mb \\
$G_{\ref{gram:sc}}$, wine & 163 & 133 & 7 ms & 3.5 Mb & 58 ms & 7.0 Mb \\
\midrule

$G_{\ref{gram:bt}}$, geospecies & 20882 & 226669749 & 624352 ms & 36844.7 Mb & N/A & N/A  \\
\bottomrule
\end{tabular}
\caption{Performance Evaluation on RDF Databases.}
\label{tab:ontologies}
\end{table*}

\paragraph{Querying Random graphs.}
The next experiments were proposed by~\cite{jochemkuijpers} and use random, synthetic graphs.
We used a graph generator function based on the definition given by by~\cite{albertbarabasi}.
Given the size of the graph in number of vertices $n$ and a constant $k \leq n$, the generator function, denoted by $\Gnk(n,k)$, starts with a clique of $k$ vertices.
For each $v$ in the $n-k$ remaining vertices, the generator adds $k$ edges from $v$ to any vertices already in the graph.
The edge labels are randomly chosen, being either $a,b,c$ or $d$.
The probability for a vertex to be chosen is directly proportional to its degree at that moment, such that the higher the degree of the vertex, higher is its probability receive the new edges.
%As we did not have access to the same databases used in their paper, we implemented our own random graph generator.

\begin{grammar} Defines the language $a^nb^mc^md^n$~\cite{jochemkuijpers}:
\label{gram:kjp_an_bm_cm_dn}
$$\Rule{S}{a ~S ~d ~|~ a ~X ~d} \qquad
\Rule{X}{b ~X ~c} ~|~ \epsilon$$
\end{grammar}

%\begin{grammar} Defines the language $a^+$~\cite{jochemkuijpers}:
%\label{gram:kjp_sl}
%$$\Rule{S_L}{a ~S_L ~|~ \epsilon}$$
%\end{grammar}

%\begin{grammar} Defines the language $a^+$~\cite{jochemkuijpers}:
%\label{gram:kjp_sr}
%$$\Rule{S_R}{a ~S_R ~|~ \epsilon}$$
%\end{grammar}

%\begin{grammar} Defines the language $a^n(b|c|d)^{4}a^{-1n}$, $n \geq 2, m \geq 1$~\cite{jochemkuijpers}:
%\label{gram:kjp_sparse}
%$$\Rule{S}{a ~P ~a^{-1}} \qquad
%\Rule{P}{a ~P ~a^{-1}} \qquad
%\Rule{P}{a ~M ~a^{-1}} $$
%$$ \Rule{M}{X ~X ~X ~X} \qquad
%\Rule{X}{b ~|~ c ~|~ d}$$
%\end{grammar}

%\begin{grammar} Defines the language %$(a|b|c|d)^+$~\cite{jochemkuijpers}:
%\label{gram:kjp_ambiguous}
%$$\Rule{S}{S ~S} \qquad
%\Rule{S}{a ~|~ b ~|~ c ~|~ d}$$
%\end{grammar}

\begin{table*}[htpb]
\centering\small
\begin{tabular}{@{}lrr|rr|rr@{}}
\toprule
&&& \multicolumn{2}{c|}{\textbf{This work}} & \multicolumn{2}{c}{\textbf{LL}~\cite{comlan}} \\
\textbf{Grammar \& Graph} & \textbf{$|V|$} & \textbf{Results} & \textbf{Time} & \textbf{Memory} & \textbf{Time} & \textbf{Memory}\\
\midrule

$G_{\ref{gram:kjp_an_bm_cm_dn}}$, \Gnk(100,1) & 100 & 5 & 3 ms & 2.5 Mb & 4 ms & 6.6 Mb \\
$G_{\ref{gram:kjp_an_bm_cm_dn}}$, \Gnk(500,1) & 500 & 25 & 11 ms & 4.8 Mb & 71 ms & 7.5 Mb \\
$G_{\ref{gram:kjp_an_bm_cm_dn}}$, \Gnk(2500,1) & 2500 & 161 & 143 ms & 18.0 Mb & 1372 ms & 11.7 Mb \\
$G_{\ref{gram:kjp_an_bm_cm_dn}}$, \Gnk(10000,1) & 10000 & 706 & 708 ms & 47.2 Mb & 20834 ms & 27.7 Mb \\
$G_{\ref{gram:kjp_an_bm_cm_dn}}$, \Gnk(100,3) & 100 & 56 & 6 ms & 2.3 Mb & 46 ms & 7.0 Mb \\
$G_{\ref{gram:kjp_an_bm_cm_dn}}$, \Gnk(500,3) & 500 & 769 & 44 ms & 6.9 Mb & 1954 ms & 8.6 Mb \\
$G_{\ref{gram:kjp_an_bm_cm_dn}}$, \Gnk(2500,3) & 2500 & 3377 & 232 ms & 22.0 Mb & 46668 ms & 17.7 Mb \\
$G_{\ref{gram:kjp_an_bm_cm_dn}}$, \Gnk(10000,3) & 10000 & 14583 & 1181 ms & 72.1 Mb & 826796 ms & 51.7 Mb \\
$G_{\ref{gram:kjp_an_bm_cm_dn}}$, \Gnk(100,5) & 100 & 312 & 6 ms & 2.9 Mb & 465 ms & 7.1 Mb \\
$G_{\ref{gram:kjp_an_bm_cm_dn}}$, \Gnk(500,5) & 500 & 2207 & 63 ms & 8.7 Mb & 11413 ms & 10.0 Mb \\
$G_{\ref{gram:kjp_an_bm_cm_dn}}$, \Gnk(2500,5) & 2500 & 13823 & 456 ms & 29.0 Mb & 415582 ms & 25.8 Mb \\
$G_{\ref{gram:kjp_an_bm_cm_dn}}$, \Gnk(10000,5) & 10000 & 77423 & 1946 ms & 102.9 Mb & N/A & N/A \\
$G_{\ref{gram:kjp_an_bm_cm_dn}}$, \Gnk(100,10) & 100 & 1068 & 18 ms & 3.6 Mb & 6362 ms & 7.8 Mb \\
$G_{\ref{gram:kjp_an_bm_cm_dn}}$, \Gnk(500,10) & 500 & 10211 & 249 ms & 15.4 Mb & 217209 ms & 14.8 Mb \\
$G_{\ref{gram:kjp_an_bm_cm_dn}}$, \Gnk(2500,10) & 2500 & 102867 & 1736 ms & 65.1 Mb & N/A & N/A \\
$G_{\ref{gram:kjp_an_bm_cm_dn}}$, \Gnk(10000,10) & 10000 & 784055 & 10476 ms & 350.7 Mb & N/A & N/A \\
\bottomrule
\end{tabular}
\caption{Experiment with grammar $G_{\ref{gram:kjp_an_bm_cm_dn}}$}
\label{tab:kjp-anbmcmdn}
\end{table*}

%\begin{table*}[htpb]
%\centering\small
%\input{tables/kjp-sparse}
%\caption{Experiment with grammar $G_{\ref{gram:kjp_sparse}}$}
%\label{tab:kuijpers-sparse}
%\end{table*}

%\begin{table*}[htpb]
%\input{tables/kjp-ambiguous}
%\caption{Experiment with grammar %$G_{\ref{gram:kjp_ambiguous}}$}
%\label{tab:kjp-ambiguous}
%\end{table*}

%\begin{table}[htpb]
%\centering\small
%\input{tables/kjp-sl-sr}
%\caption{Experiment with grammars $G_{\ref{gram:kjp_sl}}$ and $G_{\ref{gram:kjp_sr}}$ \ciro{não achei o tamanho do grafo, então rodei com (100,5)}}
%\label{tab:kuijpers-sl-sr}
%\end{table}

The runtimes and memory usage observed in this experiment follow the pattern of the previous ones:
our algorithm outperforms the running time observed for the LL-based algorithm, at the same time that it uses less memory.

\section{Final Remarks}
\label{sec:final}
We presented an algorithm for the evaluation of Context-Free Path Queries for RDF databases.
Our algorithm combines characteristics of previously proposed  techniques, in order to obtain better scalability.

We presented analysis about the correctness of our algorithm, as well as an estimation of its worst-case time and space complexity.

We validated our work by using both synthetic and real-life examples, showing that our prototype outperforms another, recently published algorithm.

The query processed by our algorithm may be defined to contain any context-free grammar.
Our results show that there is no significant difference in the performance of the algorithm in relation to conditions of the grammars, like ambiguity or spareness.

The practical use of our algorithm may be allowed by including it as part of a query language engine, as it is mentioned in~\cite{comlan}.

As future work, we will investigate the construction of a parallel version of our algorithm.
This may improve it's performance, since the treatment of unvisited vertices in position sets may be done in parallel.

We are also working on benchmarking protocols for algorithms for evaluation the of Context-Free Path Queries.
This would make possible to have more accurate data, in order to compare the different algorithms that are being proposed to implement this kind of queries.

% The general runtime complexity is cubic, except for the matrix-based approach~\cite{azimov-grigorev2017matrix}, which depends on the number of operations necessary for the multiplication of boolean matrices, what may be less than cubic.
% Most approaches achieve the optimal space complexity, since there might be up to $\bigO(|V|^2)$ edges in a graph.
% Our approach achieves both state-of-the-art runtime and space complexities.
% It is important to notice that these approaches allow the use of data structures that grow in size as needed, except for the matrix-based approach, which needs to allocate the matrices \ciro{verificar a veracidade dessa informação no artigo dos russos}.

%\chapter{teste}

%\section{Acknowledgements} \ciro{Isso é proibido no blind-reviewing.}
%This work is partly supported by INES grant CNPq/465614/2014-0
%and CAPES-DS.

\bibliographystyle{ACM-Reference-Format}
\bibliography{references}

%%% -*-BibTeX-*-
%%% Do NOT edit. File created by BibTeX with style
%%% ACM-Reference-Format-Journals [18-Jan-2012].

\begin{thebibliography}{00}

%%% ====================================================================
%%% NOTE TO THE USER: you can override these defaults by providing
%%% customized versions of any of these macros before the \bibliography
%%% command.  Each of them MUST provide its own final punctuation,
%%% except for \shownote{}, \showDOI{}, and \showURL{}.  The latter two
%%% do not use final punctuation, in order to avoid confusing it with
%%% the Web address.
%%%
%%% To suppress output of a particular field, define its macro to expand
%%% to an empty string, or better, \unskip, like this:
%%%
%%% \newcommand{\showDOI}[1]{\unskip}   % LaTeX syntax
%%%
%%% \def \showDOI #1{\unskip}           % plain TeX syntax
%%%
%%% ====================================================================

\ifx \showCODEN    \undefined \def \showCODEN     #1{\unskip}     \fi
\ifx \showDOI      \undefined \def \showDOI       #1{#1}\fi
\ifx \showISBNx    \undefined \def \showISBNx     #1{\unskip}     \fi
\ifx \showISBNxiii \undefined \def \showISBNxiii  #1{\unskip}     \fi
\ifx \showISSN     \undefined \def \showISSN      #1{\unskip}     \fi
\ifx \showLCCN     \undefined \def \showLCCN      #1{\unskip}     \fi
\ifx \shownote     \undefined \def \shownote      #1{#1}          \fi
\ifx \showarticletitle \undefined \def \showarticletitle #1{#1}   \fi
\ifx \showURL      \undefined \def \showURL       {\relax}        \fi
% The following commands are used for tagged output and should be
% invisible to TeX
\providecommand\bibfield[2]{#2}
\providecommand\bibinfo[2]{#2}
\providecommand\natexlab[1]{#1}
\providecommand\showeprint[2][]{arXiv:#2}

\bibitem[\protect\citeauthoryear{Abiteboul, Hull, and Vianu}{Abiteboul
  et~al\mbox{.}}{1995}]%
        {abiteboul1995foundations}
\bibfield{author}{\bibinfo{person}{S. Abiteboul}, \bibinfo{person}{R. Hull},
  {and} \bibinfo{person}{V. Vianu}.} \bibinfo{year}{1995}\natexlab{}.
\newblock \bibinfo{booktitle}{{\em Foundations of Databases}}.
\newblock \bibinfo{publisher}{Addison-Wesley}.
\newblock
\showISBNx{9780201537710}
\showLCCN{94019295}
\showURL{%
\url{https://books.google.com.br/books?id=HN9QAAAAMAAJ}}


\bibitem[\protect\citeauthoryear{Aho, Lam, Sethi, and Ullman}{Aho
  et~al\mbox{.}}{2007}]%
        {aho2007compilers}
\bibfield{author}{\bibinfo{person}{A.V. Aho}, \bibinfo{person}{M.S. Lam},
  \bibinfo{person}{R. Sethi}, {and} \bibinfo{person}{J.D. Ullman}.}
  \bibinfo{year}{2007}\natexlab{}.
\newblock \bibinfo{booktitle}{{\em Compilers: Principles, Techniques, and
  Tools}}.
\newblock \bibinfo{publisher}{ADDISON WESLEY Publishing Company Incorporated}.
\newblock
\showISBNx{9780321547989}
\showURL{%
\url{https://books.google.com.br/books?id=WomBPgAACAAJ}}


\bibitem[\protect\citeauthoryear{Albert and Barab\'asi}{Albert and
  Barab\'asi}{2002}]%
        {albertbarabasi}
\bibfield{author}{\bibinfo{person}{R\'eka Albert} {and}
  \bibinfo{person}{Albert-L\'aszl\'o Barab\'asi}.}
  \bibinfo{year}{2002}\natexlab{}.
\newblock \showarticletitle{Statistical mechanics of complex networks}.
\newblock \bibinfo{journal}{{\em Rev. Mod. Phys.\/}}  \bibinfo{volume}{74}
  (\bibinfo{date}{Jan} \bibinfo{year}{2002}), \bibinfo{pages}{47--97}.
\newblock
Issue 1.
\showDOI{%
\url{https://doi.org/10.1103/RevModPhys.74.47}}


\bibitem[\protect\citeauthoryear{Azimov and Grigorev}{Azimov and
  Grigorev}{2017}]%
        {azimov-grigorev2017matrix}
\bibfield{author}{\bibinfo{person}{Rustam Azimov} {and} \bibinfo{person}{Semyon
  Grigorev}.} \bibinfo{year}{2017}\natexlab{}.
\newblock \bibinfo{title}{{G}raph {P}arsing by {M}atrix {M}ultiplication}.
\newblock   (\bibinfo{year}{2017}).
\newblock
\showeprint{1707.01007}
\newblock
\shownote{arXiv:1707.01007v1.}


\bibitem[\protect\citeauthoryear{Grigorev and Ragozina}{Grigorev and
  Ragozina}{2016}]%
        {grigorev2016ll}
\bibfield{author}{\bibinfo{person}{Semyon Grigorev} {and}
  \bibinfo{person}{Anastasiya Ragozina}.} \bibinfo{year}{2016}\natexlab{}.
\newblock \showarticletitle{Context-Free Path Querying with Structural
  Representation of Result}.
\newblock \bibinfo{journal}{{\em arXiv preprint arXiv:1612.08872\/}}
  (\bibinfo{year}{2016}).
\newblock


\bibitem[\protect\citeauthoryear{Grune and Jacobs}{Grune and Jacobs}{2007}]%
        {grune2007parsing}
\bibfield{author}{\bibinfo{person}{D. Grune} {and} \bibinfo{person}{C.J.H.
  Jacobs}.} \bibinfo{year}{2007}\natexlab{}.
\newblock \bibinfo{booktitle}{{\em Parsing Techniques: A Practical Guide}}.
\newblock \bibinfo{publisher}{Springer New York}.
\newblock
\showISBNx{9780387689548}
\showLCCN{2007936901}
\showURL{%
\url{https://books.google.com.br/books?id=05xA\_d5dSwAC}}


\bibitem[\protect\citeauthoryear{Hellings}{Hellings}{2014}]%
        {Hellings14}
\bibfield{author}{\bibinfo{person}{Jelle Hellings}.}
  \bibinfo{year}{2014}\natexlab{}.
\newblock \showarticletitle{Conjunctive Context-Free Path Queries}. In
  \bibinfo{booktitle}{{\em Proc. 17th International Conference on Database
  Theory (ICDT), Athens, Greece, March 24-28, 2014}},
  \bibfield{editor}{\bibinfo{person}{Nicole Schweikardt},
  \bibinfo{person}{Vassilis Christophides}, {and} \bibinfo{person}{Vincent
  Leroy}} (Eds.). \bibinfo{publisher}{OpenProceedings.org},
  \bibinfo{pages}{119--130}.
\newblock
\showDOI{%
\url{https://doi.org/10.5441/002/icdt.2014.15}}


\bibitem[\protect\citeauthoryear{Hellings}{Hellings}{2015}]%
        {Hellings2015pathresults}
\bibfield{author}{\bibinfo{person}{Jelle Hellings}.}
  \bibinfo{year}{2015}\natexlab{}.
\newblock \showarticletitle{Path Results for Context-free Grammar Queries on
  Graphs}.
\newblock \bibinfo{journal}{{\em CoRR\/}}  \bibinfo{volume}{abs/1502.02242}
  (\bibinfo{year}{2015}).
\newblock


\bibitem[\protect\citeauthoryear{Kuijpers, Fletcher, Yakovets, and
  Lindaaker}{Kuijpers et~al\mbox{.}}{2019}]%
        {jochemkuijpers}
\bibfield{author}{\bibinfo{person}{Jochem Kuijpers}, \bibinfo{person}{George
  Fletcher}, \bibinfo{person}{Nikolay Yakovets}, {and} \bibinfo{person}{Tobias
  Lindaaker}.} \bibinfo{year}{2019}\natexlab{}.
\newblock \showarticletitle{An Experimental Study of Context-Free Path Query
  Evaluation Methods}. In \bibinfo{booktitle}{{\em Proceedings of the 31st
  International Conference on Scientific and Statistical Database Management}}.
  ACM, \bibinfo{pages}{121--132}.
\newblock


\bibitem[\protect\citeauthoryear{Medeiros, Musicante, and Costa}{Medeiros
  et~al\mbox{.}}{2019a}]%
        {MEDEIROS201975}
\bibfield{author}{\bibinfo{person}{Ciro~M. Medeiros},
  \bibinfo{person}{Martin~A. Musicante}, {and} \bibinfo{person}{Umberto~S.
  Costa}.} \bibinfo{year}{2019}\natexlab{a}.
\newblock \showarticletitle{LL-based query answering over RDF databases}.
\newblock \bibinfo{journal}{{\em Journal of Computer Languages\/}}
  \bibinfo{volume}{51} (\bibinfo{year}{2019}), \bibinfo{pages}{75 -- 87}.
\newblock
\showISSN{2590-1184}
\showDOI{%
\url{https://doi.org/10.1016/j.cola.2019.02.002}}


\bibitem[\protect\citeauthoryear{Medeiros, Musicante, and Costa}{Medeiros
  et~al\mbox{.}}{2019b}]%
        {comlan}
\bibfield{author}{\bibinfo{person}{Ciro~M. Medeiros},
  \bibinfo{person}{Martin~A. Musicante}, {and} \bibinfo{person}{Umberto~S.
  Costa}.} \bibinfo{year}{2019}\natexlab{b}.
\newblock \showarticletitle{LL-based query answering over RDF databases}.
\newblock \bibinfo{journal}{{\em Journal of Computer Languages\/}}
  \bibinfo{volume}{51} (\bibinfo{year}{2019}), \bibinfo{pages}{75 -- 87}.
\newblock
\showISSN{2590-1184}
\showDOI{%
\url{https://doi.org/10.1016/j.cola.2019.02.002}}


\bibitem[\protect\citeauthoryear{Mendelzon and Wood}{Mendelzon and
  Wood}{1989}]%
        {mendelzon1989regular-simple-paths}
\bibfield{author}{\bibinfo{person}{A.~O. Mendelzon} {and}
  \bibinfo{person}{P.~T. Wood}.} \bibinfo{year}{1989}\natexlab{}.
\newblock \showarticletitle{Finding Regular Simple Paths in Graph Databases}.
  In \bibinfo{booktitle}{{\em Proceedings of the 15th International Conference
  on Very Large Data Bases}} {\em (\bibinfo{series}{VLDB '89})}.
  \bibinfo{publisher}{Morgan Kaufmann Publishers Inc.}, \bibinfo{address}{San
  Francisco, CA, USA}, \bibinfo{pages}{185--193}.
\newblock
\showISBNx{1-55860-101-5}
\showURL{%
\url{http://dl.acm.org/citation.cfm?id=88830.88850}}


\bibitem[\protect\citeauthoryear{Mendelzon and Wood}{Mendelzon and
  Wood}{1995}]%
        {MendelzonW95}
\bibfield{author}{\bibinfo{person}{Alberto~O. Mendelzon} {and}
  \bibinfo{person}{Peter~T. Wood}.} \bibinfo{year}{1995}\natexlab{}.
\newblock \showarticletitle{Finding Regular Simple Paths in Graph Databases.}
\newblock \bibinfo{journal}{{\em SIAM J. Comput.\/}} \bibinfo{volume}{24},
  \bibinfo{number}{6} (\bibinfo{year}{1995}), \bibinfo{pages}{1235--1258}.
\newblock
\showURL{%
\url{http://dblp.uni-trier.de/db/journals/siamcomp/siamcomp24.html#MendelzonW95}}


\bibitem[\protect\citeauthoryear{P{\'e}rez, Arenas, and Gutierrez}{P{\'e}rez
  et~al\mbox{.}}{2010}]%
        {nsparql}
\bibfield{author}{\bibinfo{person}{Jorge P{\'e}rez}, \bibinfo{person}{Marcelo
  Arenas}, {and} \bibinfo{person}{Claudio Gutierrez}.}
  \bibinfo{year}{2010}\natexlab{}.
\newblock \showarticletitle{nSPARQL: A navigational language for \{RDF\}}.
\newblock \bibinfo{journal}{{\em Web Semantics: Science, Services and Agents on
  the World Wide Web\/}} \bibinfo{volume}{8}, \bibinfo{number}{4}
  (\bibinfo{year}{2010}), \bibinfo{pages}{255 -- 270}.
\newblock
\showISSN{1570-8268}
\showDOI{%
\url{https://doi.org/10.1016/j.websem.2010.01.002}}
\newblock
\shownote{Semantic Web Challenge 2009User Interaction in Semantic Web
  research.}


\bibitem[\protect\citeauthoryear{Santos, Costa, and Musicante}{Santos
  et~al\mbox{.}}{2018}]%
        {fred}
\bibfield{author}{\bibinfo{person}{Fred~C. Santos}, \bibinfo{person}{Umberto~S.
  Costa}, {and} \bibinfo{person}{Martin~A. Musicante}.}
  \bibinfo{year}{2018}\natexlab{}.
\newblock \showarticletitle{A Bottom-Up Algorithm for Answering Context-Free
  Path Queries in Graph Databases}. In \bibinfo{booktitle}{{\em Web
  Engineering}}, \bibfield{editor}{\bibinfo{person}{Tommi Mikkonen},
  \bibinfo{person}{Ralf Klamma}, {and} \bibinfo{person}{Juan Hern{\'a}ndez}}
  (Eds.). \bibinfo{publisher}{Springer International Publishing},
  \bibinfo{address}{Cham}, \bibinfo{pages}{225--233}.
\newblock
\showISBNx{978-3-319-91662-0}


\bibitem[\protect\citeauthoryear{Scott and Johnstone}{Scott and
  Johnstone}{2010}]%
        {gll}
\bibfield{author}{\bibinfo{person}{Elizabeth Scott} {and}
  \bibinfo{person}{Adrian Johnstone}.} \bibinfo{year}{2010}\natexlab{}.
\newblock \showarticletitle{GLL Parsing}.
\newblock \bibinfo{journal}{{\em Electronic Notes in Theoretical Computer
  Science\/}} \bibinfo{volume}{253}, \bibinfo{number}{7}
  (\bibinfo{year}{2010}), \bibinfo{pages}{177 -- 189}.
\newblock
\showISSN{1571-0661}
\showDOI{%
\url{https://doi.org/10.1016/j.entcs.2010.08.041}}
\newblock
\shownote{Proceedings of the Ninth Workshop on Language Descriptions Tools and
  Applications (LDTA 2009).}


\bibitem[\protect\citeauthoryear{Tomita}{Tomita}{1985}]%
        {tomita}
\bibfield{author}{\bibinfo{person}{Masaru Tomita}.}
  \bibinfo{year}{1985}\natexlab{}.
\newblock \bibinfo{booktitle}{{\em Efficient Parsing for Natural Language: A
  Fast Algorithm for Practical Systems}}.
\newblock \bibinfo{publisher}{Kluwer Academic Publishers},
  \bibinfo{address}{Norwell, MA, USA}.
\newblock
\showISBNx{0898382025}


\bibitem[\protect\citeauthoryear{Valiant}{Valiant}{1975}]%
        {Valiant75}
\bibfield{author}{\bibinfo{person}{Leslie~G. Valiant}.}
  \bibinfo{year}{1975}\natexlab{}.
\newblock \showarticletitle{General Context-Free Recognition in Less than Cubic
  Time}.
\newblock \bibinfo{journal}{{\em J. Comput. Syst. Sci.\/}}
  \bibinfo{volume}{10}, \bibinfo{number}{2} (\bibinfo{year}{1975}),
  \bibinfo{pages}{308--315}.
\newblock
\showDOI{%
\url{https://doi.org/10.1016/S0022-0000(75)80046-8}}


\bibitem[\protect\citeauthoryear{W3C}{W3C}{2012}]%
        {w3c2012sparql-query-lang}
\bibfield{author}{\bibinfo{person}{W3C}.} \bibinfo{year}{2012}\natexlab{}.
\newblock \bibinfo{title}{{SPARQL} 1.1 Query Language}.
\newblock   (\bibinfo{year}{2012}).
\newblock
\showURL{%
\url{https://www.w3.org/TR/2012/PR-sparql11-query-20121108/}}


\bibitem[\protect\citeauthoryear{Zhang, Feng, Wang, Rao, and Wu}{Zhang
  et~al\mbox{.}}{2016}]%
        {zhang2016}
\bibfield{author}{\bibinfo{person}{Xiaowang Zhang}, \bibinfo{person}{Zhiyong
  Feng}, \bibinfo{person}{Xin Wang}, \bibinfo{person}{Guozheng Rao}, {and}
  \bibinfo{person}{Wenrui Wu}.} \bibinfo{year}{2016}\natexlab{}.
\newblock \showarticletitle{Context-Free Path Queries on {RDF} Graphs}. In
  \bibinfo{booktitle}{{\em International Semantic Web Conference {(1)}}} {\em
  (\bibinfo{series}{Lecture Notes in Computer Science})},
  Vol.~\bibinfo{volume}{9981}. \bibinfo{pages}{632--648}.
\newblock


\end{thebibliography}

%\section*{Appendices\ciro{NOT ALLOWED!!!}
%\appendix

%\section{Complete Version of Example~\ref{ex:example2}}

%\section{Proofs}
%\label{sec:proofs}

\end{document}